\newtheorem{thm}{Theorem}[section]
\theoremstyle{definition}
\newtheorem{defin}[thm]{Definition}
\newtheorem{rem}[thm]{Remark}
\numberwithin{equation}{section}
\def \lim   {\text {\rm lim}}
\def \tr    {\text {\rm tr}}
\begin{document}


\baselineskip=17pt



\title[Invariants for maximally entangled vectors \& unitary bases]{Invariants
for maximally entangled vectors and unitary bases}

\author[Sibasish Ghosh]{Sibasish Ghosh}
 \address{Optics \& Quantum Information Group\\The Institute of Mathematical
Sciences\\C.I.T. Campus, Taramani\\ Chennai-600113, India}
 \email{sibasish@imsc.res.in}

\author[Ajit Iqbal Singh]{Ajit Iqbal Singh}
\address{Theoretical Statistics and Mathematics Unit\\ Indian
Statistical Institute\\ 7, S. J. S. Sansanwal Marg\\ New Delhi-110 016,
India}
\email{aisingh@isid.ac.in, aisingh@sify.com}

\date{}

\begin{abstract}
The purpose of this paper is to study the equivalence relation on unitary bases
defined by R. F. Werner [{\it J. Phys. A: Math. Gen.} {\bf 34} (2001) 7081], relate 
it to local operations on maximally entangled vectors bases, find an invariant for 
equivalence classes in terms of certain commuting systems, and, relate it to 
mutually unbiased bases and Hadamard matrices. Illustrations are given in the 
context of latin squares and projective representations as well. Applications to 
quantum tomography are indicated.
\end{abstract}

\subjclass[2010]{Primary 81P40; Secondary 05B20, 15B34; 20C25}

\keywords{}

\maketitle

\section{Introduction}
The well-known Bell basis \cite{Be} for two qubits can be utilised to perform
different roles like to construct (i) an orthonormal basis of maximally
entangled vectors (MEV) or states (MES)  in $\mathbb{C}^2 \otimes \mathbb{C}^2,$
 (ii) an orthonormal basis of unitary operators $(UB)$ on $\mathbb{C}^2,$  (iii)
a complete system of mutually unbiased bases (MUB) in   $\mathbb{C}^2,$ (iv) a
prototype for MEV's in the sense that all MEV's can be obtained from Bell basis
by local operations.
\vskip0.1in
The roles in (i) and (ii) were achieved by the generalization to two qudits
(known as Bell states again) by Bennett et. al \cite{BBCJPW}.  Knill \cite{EK}
gave a method to construct unitary bases, called {\it nice error bases} in terms
of projective unitary representations of a finite group. The most basic nice error 
bases in terms of Weyl operators have been vigorously pursued amongst others, by
Parthasarathy and associates; an elementary account is available in \cite{Pa}.
It was figured out by Vollbrecht and Werner \cite{VW} that Bell states do not
perform other roles or have some other properties like (iv) of Bell basis qubits
if $d > 2.$ Werner \cite{Wer} established a one-to-one correspondence between
``tight'' quantum teleportation and dense coding schemes and also MEV, UB and
depolarizing operations. He gave a general construction procedure for UB
called {\it shift and multiply} involving latin squares and Hadamard matrices,
which generalises Weyl operators again. He
also defined an equivalence relation, say, $\sim$  between UB's. Ivanovic
\cite{Iv}, Wootters and Fields \cite{WF}, Bandyopadhyay et al. \cite{BBRV},
Lawrence et al. \cite{LBZ} constructed MUB's for $d,$ a prime and a prime power,
gave a relationship between MUB's and commuting properties of a decomposition of
UB, and pointed out the limitations of UB, for certain composite $d,$ to give
rise to a complete system of MUB's as in (iii) above.  The starting point
 for all of them was to determine a quantum state using quantum measurements
that correspond to pure states arising from the sought after complete system of
MUB's.
\vskip0.1in
The purpose of this paper is to study $\sim,$ relate it to local operations on
MEV's, find an invariant in terms of certain commuting systems for equivalence
classes and relate them to MUB's, underlying projective representations, or, 
latin squares and Hadamard matrices, and, to measurements of a quantum
state. 

In Section 2 of the paper, we introduce fan representations of unitary bases. The concepts and results of Section 2 are illustrated via different examples in Section 3. Some of the ideas developed in Section 2 as well as the examples discussed in Section 3 are then applied in Section 4 to discuss the issue of quantum state tomography. Finally, we draw the conclusion in Section 5.   

\def\cf{{\mathcal{F}}}
\def\cF{{\mathcal{F}}}

\section{Unitary bases and their fan representation}

\subsection{Basics} We begin with some basic material.
\vskip0.1in
We shall freely use \cite{VW} and \cite{Wer} in this section. Let $\mathcal{H}$
be a $d$-dimensional Hilbert space with $2\leq d< \infty$ and $X$ a set of $d^2$
elements such as $\{1,2,\ldots,d^2 \}$ or $\{0,1,\ldots, d^2-1\}$ etc. .
\begin{itemize}
 \item[(i)] Let $\mathbf{U} = \{U_x : x \in X\}$ be a {\it unitary basis,} in
short, UB, i.e. a collection $\mathbf{U}$ of unitary operators $U_x \in
\mathcal{B}(\mathcal{H}),$ the $\ast$-algebra of bounded linear operators on
$\mathcal{H}$ to itself, such that $\tr (U_x^{\ast}U_y)=d \delta_{xy}$ for $x,y
\in X.$ 

\item[(ii)] Rewording a part of the discussion after Proposition 9 \cite{Wer},
we
call two unitary bases $\mathbf{U}$ and $\mathbf{U}^{\prime}$  {\it equivalent}
if there exist unitaries $V_1,$ $V_2$ in $\mathcal{B}(\mathcal{H})$ and a
relabelling $x \rightarrow x^{\prime}$ of $X$ such that $U_{x^{\prime}}^{\prime}
= V_1 U_x V_2$ for $x$ in $X.$

\item[(iii)] We fix an orthonormal basis $\mathbf{e} = \{e_j: 1 \leq j \leq d\}$
in $\mathcal{H}$ and identify $\mathcal{H}$ with $\mathbb{C}^d$ as such. This
will permit us to identify $\mathcal{B}(\mathcal{H})$ with the $\ast$-algebra
$M_d$ of complex $d \times d$ matrices.  We set
$\Omega = \frac{1}{\sqrt{d}} \sum\limits_j e_j \otimes e_j.$

\item[(iv)] We recall a  one-to-one linear correspondence $\tau$ (in terms of
this
maximally entangled vector $\Omega$ in $\mathcal{H} \otimes \mathcal{H}$)
between $\Psi \in \mathcal{H} \otimes \mathcal{H}$  and $A \in
\mathcal{B}(\mathcal{H}),$ as set up in \cite{VW}, Lemma 2 or \cite{Wer}, Proof
of Theorem 1, for instance, via $\langle e_j| Ae_{k}\rangle = \sqrt{d} \langle
e_j \otimes e_k, \Psi \rangle.$ At times we express this as $\Psi \rightarrow
A_{\Psi}$ or, even $A \rightarrow \Psi_A$ and write $\tau (\Psi)=A_{\Psi}.$

\item[(v)] Let $\tau$ be as in (iv) above.
\end{itemize}

\begin{itemize}
 \item[(a)] The map $\tau$ takes the set $\mathcal{M}$  of maximally entangled
vectors in $\mathcal{H} \otimes \mathcal{H}$ to the set
$\mathcal{U}(\mathcal{H})$ of unitaries on $\mathcal{H}.$
\item[(b)] The rank of $A$ and the Schmidt rank of $\Psi_A$ are the same.
\item[(c)] Entropy of $A^{\ast} A$ or certain variants are in vogue as measures
of entanglement of $\Psi_A.$
\end{itemize}
\begin{itemize}
\item[(vi)] The {\it flip} (or {\it swap}) operation in $\mathcal{H} \otimes
\mathcal{H}$ is the linear operator $\mathbb{F}$ determined by $\mathbb{F}
(\varphi \otimes \psi) = \psi \otimes \varphi,$ for $\varphi,$ $\psi \in
\mathcal{H},$ or equivalently, by $\mathbb{F}(e_j \otimes e_k) = e_k \otimes
e_j$ for $1 \leq j,$ $k \leq d.$ Then $\mathbb{F}$ induces the {\it transpose
operator } $A \rightarrow A^t$ on $\mathcal{B}(\mathcal{H})$ to itself defined
in the basis $\mathbf{e}.$ We note the useful underlying fact: $\Psi = (A_{\Psi}
\otimes I ) \Omega = (I \otimes A_{\Psi}^{t}) \Omega,$ where $I$ is  the
identity operator on $\mathcal{H}.$

\item[(vii)] Members of the set $\mathcal{L} \mathcal{U} (\mathcal{H} \otimes
\mathcal{H})$ of local unitaries, viz., $\{U_1 \otimes U_2 : U_1, U_2 \in
\mathcal{U}(\mathcal{H})\}$ and $\mathbb{F}$ all take unit product vectors to
unit product vectors and Proposition 4 in \cite{VW} says that a unitary operator
$U$ on $\mathcal{H} \otimes \mathcal{H}$ satisfies $U \mathcal{M} \subset
\mathcal{M}$ if and only if $U$ is local up to a flip, i.e., there are unitaries
$U_1, U_2$ on $\mathcal{H}$ such that either $U = U_1 \otimes U_2$ or \linebreak
$U = (U_1 \otimes U_2)$ $\mathbb{F}.$
\end{itemize}

\begin{defin}[Unitary systems]\label{def2.2} 
\begin{itemize}
\item[]
 \item[(i)] A non-empty set $\mathbf{W} = \{ W_y : y \in Y\}$ of unitaries in
$\mathcal{H}$ indexed by $Y$ will be called a {\it unitary system,} in short,
US, if $\tr \,W_y =0$ and $\tr \, (W_x^{\ast} W_y)=d \delta_{xy}$ for $x,$ $y$
in $Y.$ The number $s = \# Y$ will be called the {\it size} of $\mathbf{W}.$

\item[(ii)] An {\it abelian unitary system,} in short, {\it AUS}, is a unitary
system $\mathbf{W}$ with $W_xW_y=W_y W_x$ for $x,y$ in $Y.$

\item[(iii)] {\it A maximal abelian subsystem of a unitary system} $\mathbf{W},$
in short, $\mathbf{W}$-MASS, is a subset $\mathbf{V}$ of $\mathbf{W}$ which is
an AUS and maximal with this property.

\item[(iv)] A {\it tagged unitary system,} in short, TUS, is a triple
$\mathbf{T}=(x_0, U_{x_{0}}, \mathbf{W})$ where $x_0 \in X,$ $U_{x_{0}} \in
\mathcal{U} (\mathcal{H}),$ $\mathbf{W} = \{W_y : y \in X, y \neq x_0\}$ is a
unitary system. We say $\mathbf{T}$ is tagged at $x_0.$
\end{itemize}
\end{defin}

\begin{rem}\label{re2.2} 
Let $\mathbf{T}$ be a tag at $x_0.$ Set $W_{x_{0}} = I.$  We set
$\mathbf{U}=\{U_x = U_{x_{0}} W_x: x \in X\}.$ Then $\mathbf{U}$ satisfies
$W_x^{\ast} W_y = U_x^{\ast} U_y$ for $x,$ $y \in X,$ and, therefore,
$\mathbf{U}$ is a UB. We call $\mathbf{U}$ the {\it $\mathbf{T}$-associated UB.}
 On the other hand, given a UB $\mathbf{U},$ for $x_0 \in X,$ setting
$\mathbf{W} = \{W_x = U_{x_{0}}^{\ast} U_x, x \in X, x \neq x_0   \},$ we have
$\mathbf{T}= (x_0, U_{x_{0}}, \mathbf{W})$ is a tag at $x_0.$ It satisfies
$W_x^{\ast} W_y = U_x^{\ast} U_y$ for $x,$ $y \in X,$ where $W_{x_{o}}$ has been
taken as $I.$ We call $\mathbf{T}$ the $\mathbf{U}$-{\it associated tag} at
$x_0.$ We note that in both cases, for $x, y \in X,$ $W_y W_x^{\ast} =
U_{x_{0}}^{\ast} U_y U_x^{\ast} U_{x_{0}}.$ Further, for $x,$ $y \in X,$ $W_x
W_y = W_y W_x$ if and only if $U_x U_{x_{0}}^{\ast} U_y= U_y U_{x_{0}}^{\ast}
U_x.$ We denote this condition by $\mathbf{T}(x, x_0, y)$ and call it by {\it
Twill}. Really speaking, both the
associations are on the left and have obvious right versions as well. 
\end{rem}

\begin{rem}[Relationship with MUB and Hadamard matrices]\label{rem2.3} 
\begin{itemize}
  \item[]

  \item[(i)] Let $\mathbf{W}$ be a unitary system. Then $\tr\, A=0$ for each $A$
in the linear span $\mathcal{L}$ of $\mathbf{W}.$ So $I \not\in \mathcal{L}.$
Also $\tr\, (W_x^{\ast} W_y) = d \delta_{xy}$ for $x,y$ in $Y$ forces $\{W_x : x
\in Y\}$ to be linearly independent. So we have $s=\# Y \leq d^2-1$ and, thus,
we may consider $Y$ as a proper subset of $X,$ if we like.

\item[(ii)] Given a US $\mathbf{W},$ $\widetilde{\mathbf{W}} = \mathbf{W} \cup 
\{I\}$ will be called the {\it unitization} of $\mathbf{W}.$ We note that
$\widetilde{\mathbf{W}}$ is a system of mutually orthogonal unitaries and it is
a UB if and only if the size of $\mathbf{W}$ is $d^2-1.$

\item[(iii)] AUS of size $d-1$ have been very well utilized by Wootters and
Fields \cite{WF} and Bandyopadhyay et. al. \cite{BBRV}  to study {\it mutually
unbiased bases   } in $\mathcal{H},$ see also \cite{LBZ},
\cite{PDB}, \cite{Sz}, \cite{msp} and \cite{psmw}. (\cite{BBRV}, Lemma
3.1)
records the basic fact that the size of an AUS can at most be $d-1.$

\item[(iv)] In fact, if $\mathbf{W} = \{W_x: x \in Y\}$ is an AUS of size $s,$
then there is a unitary $U$ on $\mathcal{H}$ and mutually orthogonal operators
represented by mutually orthogonal diagonal matrices $D_x,$ $x \in Y$ with
entries in the unit circle $\mathbf{S}^1$ such that $\tr\, D_x=0,$ $W_x=U D_x
U^{\ast},$ $x \in Y.$ As a consequence,  the  $(s+1) \times d$ matrix $H$ formed
by the diagonals of $I$ and $D_x,$ $x \in Y,$ as rows is a partial complex
Hadamard matrix  in the sense that $HH^{\ast} = d I_{s+1}$  and $|H_{jk}|=1$ for
each entry $H_{jk}$ of $H.$ This  forces $s \leq d-1.$  History and development
of Hadamard matrices is very long and fascinating. We just mention a few sources
 \cite{deLL}, \cite{Ha}, \cite{PDB}, \cite{Sz} and \cite{TZ}, \cite{V1},
\cite{V2}, \cite{Ho} which we can directly use.

\item[(v)] (\cite{BBRV}, Theorem 3.2) says that if a $U B$ $\mathbf{U}$
containing $I$ can be partitioned as a union of $(d+1)$ $\mathbf{U}\backslash
\{I\}$-MASS's of size $(d-1)$ each together with $I,$ then one can construct a
complete system of $(d+1)$ MUB's of $\mathbb{C}^d.$ The converse is given by
them as (\cite{BBRV}, Theorem 3.4). Concrete illustration is given for $d=p^r,$
with $p$ a prime. 

\item[(vi)] Obstructions to construction of MUB have occupied many researchers,
e.g., see \cite{WF}, \cite{BBRV}, \cite{PDB}, \cite{msp}, \cite{psmw},
\cite{PPGB}. Even when, say for, $d$ a prime power, complete systems of MUB's
exist, some subsystems of certain UB's may not be extendable to  complete
systems of MUB's, this is explained well by Mandayam, Bandyopadhyay, Grassl and
Wootters \cite{psmw} for
$d=2^2, 2^3, \ldots.$ 
\end{itemize}
\end{rem}

Contents of the next three items are based on the fourth section of the second
author's preprint \cite{Si}.
 
\begin{defin}\label{def2.4} 
Let $\mathcal{F},$ $\mathcal{G}$ be subsets of $\mathcal{B}(\mathcal{H}).$
\begin{itemize}
 \item[(i)] We say that $\mathcal{F}$ is {\it collectively unitarily equivalent}
to $\mathcal{G},$ in short, $\mathcal{F}$CUE$\mathcal{G},$ if
there exists a
$V \in \mathcal{U}(\mathcal{H})$ such that $\mathcal{G} = \{V^{\ast} AV: A \in
\mathcal{F}\}.$ We may say $\mathcal{F}$CUE$\mathcal{G}$ via $V.$

\item[(ii)] In case $\mathcal{F}$ and $\mathcal{G}$ are decomposed as
$\mathcal{F} = \underset{\gamma \in \Gamma}{\cup} \mathcal{F}_{\gamma},$
$\mathcal{G} = \underset{\gamma \in \Gamma}{\cup} \mathcal{G}_{\gamma}$
respectively, then we may require that for $\gamma \in \Gamma,$
$\mathcal{F}_{\gamma}$CUE$\mathcal{G}_{\gamma}$ via $V,$ and say that
$\mathcal{F}_{\Gamma}$CUE$\mathcal{G}_{\Gamma},$ or, if no confusion arises
$\mathcal{F}$CUE$\mathcal{G}.$

\item[(iii)] In case $\mathcal{F}$ and $\mathcal{G}$ are indexed by a set
$\Lambda$ as $\{A_{\alpha}:\alpha \in \Lambda \}$ and $\{B_{\alpha} : \alpha \in
\Lambda \}$ respectively, then (as a special case of (ii) above) we may require
that $B_{\alpha} = V^{\ast} A_{\alpha} V, \alpha \in \Lambda$ and say that
$\mathcal{F}_{\Lambda}$CUE$\mathcal{G}_{\Lambda},$ or, if no confusion arises,
$\mathcal{F}$CUE$\mathcal{G}.$
\end{itemize}
\end{defin}

\begin{rem}\label{rem2.5} 
 \begin{itemize}
\item[]
  \item[(i)] The operator $V$ in the definition above may not be unique. In
fact, if $U^{\ast} \mathcal{G}U = \mathcal{G}$ for some $U \in
\mathcal{U}(\mathcal{H})$ then $VU$ works fine too.

\item[(ii)] $\mathcal{F}$CUE$\mathcal{G}$ if and only if
$\mathcal{F}_{\Lambda}$CUE$\mathcal{G}_{\Lambda}$ for some indexing
$\mathcal{F}_{\Lambda}$ and
$\mathcal{G}_{\Lambda}$ of $\mathcal{F}$ and $\mathcal{G}$ respectively by the
same index set $\Lambda.$ So we may fix some such indexing, if we like.

\item[(iii)] The relation CUE (in both the senses in Definition \ref{def2.4} 
above) is an equivalence relation.

\item[(iv)] It can be readily seen that if $\mathcal{F}$CUE$\mathcal{G}$ and
$\mathcal{F}$ is a commuting family then so is $\mathcal{G}.$

\item[(v)] If $\mathcal{F}_{\Lambda}$CUE$\mathcal{G}_{\Lambda}$ via $V,$ then
for each $\alpha \in \Lambda,$ the spectrum $\sigma (A_{\alpha})=\sigma
(B_{\alpha});$ and for $\alpha \in \Lambda,$ for an eigenvalue $\lambda$ of
$A_{\alpha},$ $\xi$ is an eigenvector for $B_{\alpha}$ with eigenvalue $\lambda$
if and only if $V\xi$ is an eigenvector for $A_{\alpha}$ with eigenvalue
$\lambda.$

\item[(vi)] If $\mathcal{F}= \{F_{\alpha}: 1 \leq \alpha \leq n\}$  is a
commuting $n$-tuple of normal operators in $\mathcal{B}(\mathcal{H}),$ then
there exists a $U \in \mathcal{U}(\mathcal{H})$ and an $n$-tuple $\mathcal{D}$
of operators ($D_{\alpha}: 1 \leq \alpha \leq n  $) represented by diagonal
matrices $\{\widetilde{D}_{\alpha} : 1 \leq \alpha \leq n\}$ with respect to
basis $\mathbf{e}$ such that $F_{\alpha} = U D_{\alpha} U^{\ast},$ $1 \leq
\alpha \leq n.$ In other words, $\mathcal{F}$CUE$\mathcal{D}.$ The  converse
is
also true.
 \end{itemize}
\end{rem}

\begin{rem}[{\it Application of CUE to construction of PPT
matrices}]\label{rem2.6} 
 \begin{itemize}
\item[]
  \item[(i)] Garcia and Tener (\cite{GT}, Theorem 1.1) obtained a canonical
decomposition for complex matrices $T$ which are UET,  i.e., unitarily
equivalent to their transposes $T^t,$ we may call a tuple $\mathcal{F} =
(F_{\alpha} : 1 \leq \alpha \leq n)$ of $d \times d$ matrices {\it collectively
unitarily equivalent} to the respective {\it transposes,} in short, CUET,
if
$ \mathcal{F}$CUE$\mathcal{F}^t,$ where $\mathcal{F}^t = (F_{\alpha}^t : 1 \leq
\alpha \leq n).$
\begin{itemize}
 \item[(a)]  Arveson \cite[Lemma
A.3.4]{Ar} gives that 
$$\left (\left (\begin{array}{ccc}0 & \lambda &
1
\\ 0 & 0 & 0 \\ 0 & 0 & 0 \end{array} \right ), \left
(\begin{array}{ccc} 0&0& \mu \\ 0 & 1 & 0 \\ 0  & -\lambda & 0
\end{array} \right ) \right )$$ is not CUET where $\lambda$ is a
non-real
complex number and $\mu$ is a complex number with $|\mu | = (1 +
|\lambda|^2)^{\frac{1}{2}}.$ 

\item[(b)] Garcia and Tener [\cite{GT}, expressions (1.4), (1.5), (1.6)] note that if
$T=\left (\begin{array}{cc} A & B \\ D & A^t \end{array} \right ),$ with $d=2n,$
$A,B,D$ $n \times n$ matrices satisfying $B^t = -B,$ $D^t = - D$ (to be
called skew-Hamiltonian (SHM, in short)) then it is UET simply because $T =
J T^t J^{\star},$ with $J=\left (\begin{array}{cc} 0 & I \\ -I & 0
\end{array} \right ).$  We can now immediately strengthen this to : for every 
$d,$ every non-empty collection  $\mathcal{F}$ of $d \times d$ SHM's is CUET via
$J.$ For more details one can see \cite{GT}, particularly \S 6 and \S7.

\item[(c)] (\cite{GT}, items 8.3, 8.4 and 8.5) tell us how to construct CUET
tuples.

\item[(d)] Tadez and Zyczkowski (\cite{TZ}, 4.1) indicate a general method to
construct a class of CUET tuples of matrices. Let $P$ be the permutation
matrix given in column form as $[e_1, e_d, e_{d-1}, \ldots, e_2].$ Let ${\xi} =
({\xi}_j) \in \mathbb{C}^d.$ Set $C_{\xi} = [C^{\xi}_{jk}]$ to be the $d \times
d$ circulant matrix given by $C^{\xi}_{jk} = \xi_{j-k}$ $(\mbox{mod}\, d).$ Then
$C_{\xi}^t = P^t C_{\xi} P.$ Because $P$ is a real unitary matrix, this gives
that any tuple consisting of $C_{\xi}$'s is CUET.
\end{itemize}

\item[(ii)] 

\noindent Let $[A_{jk}]$ be a positive block
matrix such that $(A_{jk}: 1 \leq j,k \leq n)$ is CUET. Then
$[A_{jk}^t]$  is positive. To see this we first note that there is a unitary
matrix $U$ such that
$A_{jk} = UA_{jk}^t U^{\ast}$ for $1 \leq j,k \leq n.$  Let
$\widetilde{U}$ be the block matrix $[\delta_{jk}
U],$ with $\delta_{jk} = 0$ for $j \neq k$ and $1$ for
$j=k.$ Then $\widetilde{U}$  is unitary and $[A_{jk}^t] =
\widetilde{U}^{\ast} [A_{jk}] \widetilde{U}.$ So
$[A_{jk}^t]$ is positive. 

\item[(iii)] Items (i)(b) and (ii) above can be put together to construct
matrices with positive partial transposes, in short, PPT matrices.
\vskip0.2in
\noindent{\bf Step 1:} Let $n \ge 2$ and put $m=\frac{n(n+1)}{2}.$
Use (i)(b) above to construct a CUET $m$-tuple $(Y_j:1 \leq j
\leq m)$ with $Y_j \ge 0$ for $1 \leq j \leq n$ (we may take all
$Y_j,$ $1 \leq j \leq n$ to be zero, for instance). For $1 \leq j
\leq m,$ we have $Y_j = QY_j^t Q^{\ast}$ and, therefore $Y_j^{\ast} =
QY_j^{\ast t} Q^{\ast}.$ We set $B_{jj} = Y_j$ for $1 \leq j \leq n,$
arrange $Y_j$ for $n+1 \leq j \leq \frac{n(n+1)}{2}$ as $B_{pq},$ $1
\leq p < q \leq n$ and take $B_{qp} = B_{pq}^{\ast}$ for $1 \leq p <
q \leq n.$ Thus, we obtain a block matrix $B = [B_{jk}]$ which is
Hermitian and $\{B_{jk} : 1 \leq j, k \leq n \}$ is CUET.
\vskip0.1in
\noindent{\bf Step 2:} The set $\{a
\in \mathbb{R}: B+a \,\, I_{n^{2}} \ge 0\}$ is an interval $[a_0,
\infty)$ for some $a_0 \in \mathbb{R}.$ We take any $a$ in this
interval and set $A=B+a \,\, I_{n^{2}}$ i.e., $A_{jk} = B_{jk}$ for $j
\neq k,$ whereas $A_{jj} = B_{jj} + a I_n$ for $1 \leq j, k \leq n.$
Then $\{A_{jk} : 1 \leq j, k \leq n \}$ is CUET and $A \ge 0.$ So we
can apply (ii) above to conclude that $A$ is a PPT matrix.

 \end{itemize}
\end{rem}

\begin{thm}\label{thm2.7} 
Let $\mathbf{U},$ $\mathbf{U}^{\prime}$ be unitary bases for $\mathcal{H}.$ Then
the following are equivalent.
\begin{itemize}
 \item[(i)] $\mathbf{U}$ is equivalent to $\mathbf{U}^{\prime},$
\item[(ii)] for some $\mathbf{U}$-associated tag $\mathbf{T}$ and some
$\mathbf{U}^{\prime}$-associated tag $\mathbf{T}^{\prime},$ \\ $\mathbf{W}$CUE
$\mathbf{W}^{\prime}.$

\item[(iii)] for each $\mathbf{U}$-associated tag $\mathbf{T},$ there is a
$\mathbf{U}^{\prime}$-associated tag $\mathbf{T}^{\prime}$ such that
$\mathbf{W}$CUE$\mathbf{W}^{\prime}.$
\end{itemize}
\end{thm}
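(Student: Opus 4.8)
The plan is to unwind the definitions and reduce the equivalence $\mathbf{U}\sim\mathbf{U}^{\prime}$ to a \emph{single} collective unitary conjugation between the associated unitary systems. Recall from Remark \ref{re2.2} that a $\mathbf{U}$-associated tag at $x_0$ has $W_x=U_{x_0}^{\ast}U_x$ (with $W_{x_0}=I$), so that $\mathbf{W}=U_{x_0}^{\ast}\mathbf{U}$ as unordered sets, and a different choice of base point $x_1$ merely replaces $\mathbf{W}$ by $U_{x_1}^{\ast}U_{x_0}\mathbf{W}$, a left multiple by a fixed unitary that lies in the UB-equivalence orbit; similarly relabellings of $X$ just permute the index set $Y$. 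Thus the three conditions (i), (ii), (iii) should all collapse to the statement ``$\mathbf{W}^{\prime}$ is a collective unitary conjugate of $\mathbf{W}$, possibly after left-multiplying by a fixed unitary and relabelling.'' The proof naturally runs (i)$\Rightarrow$(iii)$\Rightarrow$(ii)$\Rightarrow$(i), since (iii) formally implies (ii), and (i)$\Rightarrow$(iii) is the substantive direction while (ii)$\Rightarrow$(i) is its converse.

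For (i)$\Rightarrow$(iii): suppose $U_{x^{\prime}}^{\prime}=V_1U_xV_2$ for all $x$, with a relabelling $x\mapsto x^{\prime}$. Fix any $\mathbf{U}$-associated tag $\mathbf{T}=(x_0,U_{x_0},\mathbf{W})$, so $W_x=U_{x_0}^{\ast}U_x$. I would choose the $\mathbf{U}^{\prime}$-associated tag $\mathbf{T}^{\prime}$ at the relabelled point $x_0^{\prime}$, so that $W^{\prime}_{x^{\prime}}=(U^{\prime}_{x_0^{\prime}})^{\ast}U^{\prime}_{x^{\prime}}=(V_1U_{x_0}V_2)^{\ast}(V_1U_xV_2)=V_2^{\ast}U_{x_0}^{\ast}U_xV_2=V_2^{\ast}W_xV_2$. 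Hence with $V=V_2$ we get $\mathbf{W}^{\prime}=\{V^{\ast}W_xV\}$, i.e.\ $\mathbf{W}$CUE$\mathbf{W}^{\prime}$ (via $V_2$), with the index relabelling $x\mapsto x^{\prime}$ supplying the common index set required in Definition \ref{def2.4}(iii). Along the way one checks that $V_1U_{x_0}V_2$ is indeed unitary (product of unitaries) and that the CUE is witnessed by the \emph{same} $V$ for all indices, which is exactly what the definition asks; the $V_1$ on the left has been absorbed into the tag and plays no further role.

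For (ii)$\Rightarrow$(i): given $\mathbf{T}$ tagged at $x_0$ with $W_x=U_{x_0}^{\ast}U_x$, $\mathbf{T}^{\prime}$ tagged at $x_0^{\prime}$ with $W^{\prime}_{x^{\prime}}=(U^{\prime}_{x_0^{\prime}})^{\ast}U^{\prime}_{x^{\prime}}$, and $V\in\mathcal{U}(\mathcal{H})$ with $W^{\prime}_{x^{\prime}}=V^{\ast}W_xV$ for the matched indexing $x\mapsto x^{\prime}$, I would set $V_2=V$ and $V_1=U^{\prime}_{x_0^{\prime}}V^{\ast}U_{x_0}^{\ast}$. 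Then for every $x$, $V_1U_xV_2=U^{\prime}_{x_0^{\prime}}V^{\ast}U_{x_0}^{\ast}U_xV=U^{\prime}_{x_0^{\prime}}V^{\ast}W_xV=U^{\prime}_{x_0^{\prime}}W^{\prime}_{x^{\prime}}=U^{\prime}_{x_0^{\prime}}(U^{\prime}_{x_0^{\prime}})^{\ast}U^{\prime}_{x^{\prime}}=U^{\prime}_{x^{\prime}}$ (using $W_{x_0}=I$, $W^{\prime}_{x_0^{\prime}}=I$ at the tagged indices, and the CUE relation at the rest). Both $V_1,V_2$ are unitary, and $x\mapsto x^{\prime}$ is the required relabelling, so $\mathbf{U}\sim\mathbf{U}^{\prime}$. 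Finally (iii)$\Rightarrow$(ii) is immediate since (iii) produces, for a particular choice of $\mathbf{T}$, a matching $\mathbf{T}^{\prime}$ with $\mathbf{W}$CUE$\mathbf{W}^{\prime}$.

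The only mildly delicate point — the ``main obstacle'' such as it is — is bookkeeping the interaction between the relabelling of $X$, the choice of tag base point, and the requirement in Definition \ref{def2.4}(iii) that the CUE be realized by a \emph{single} $V$ against a \emph{common} index set: one must be careful to tag $\mathbf{U}^{\prime}$ at the image $x_0^{\prime}$ of the base point used for $\mathbf{U}$ (not at an arbitrary point), so that the tagged indices are matched and $W_{x_0}=W^{\prime}_{x_0^{\prime}}=I$ drop out consistently. Once the tags are aligned this way, the computation is the two-line conjugation identity above, and the equivalence of all three statements follows. I would also remark, as a sanity check, that Remark \ref{rem2.5}(iii)--(iv) guarantee CUE is an equivalence relation respecting commutation, which is consistent with the claim that it is the right invariant-bearing notion here.
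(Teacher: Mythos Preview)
Your proof is correct and follows essentially the same route as the paper: the cycle (i)$\Rightarrow$(iii)$\Rightarrow$(ii)$\Rightarrow$(i), tagging $\mathbf{U}^{\prime}$ at the relabelled point $x_0^{\prime}$ to obtain $W^{\prime}_{x^{\prime}}=V_2^{\ast}W_xV_2$, and for the converse defining $V_1=U^{\prime}_{x_0^{\prime}}V^{\ast}U_{x_0}^{\ast}$. The computations and the handling of the relabelling match the paper's argument exactly.
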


\begin{proof}
 (i) $\Rightarrow$ (iii), Suppose $\mathbf{U} \sim \mathbf{U}^{\prime}.$ Then
there exist $V_1, V_2 \in \mathcal{U}(\mathcal{H})$ and a relabelling $x
\rightarrow x^{\prime}$ of $X$ such that $U_{x^{\prime}}^{\prime} = V_1 U_{x}
V_2$ for $x \in X.$ Consider any $x_0 \in X$ and let $\mathbf{T} = (x_0,
U_{x_{0}},
\mathbf{W})$ be the $\mathbf{U}$-associated tag at $x_0$ and
$\mathbf{T}^{\prime}= (x_0^{\prime}, U_{x_{0}^{\prime}}^{\prime},
\mathbf{W}^{\prime}),$ the $\mathbf{U}^{\prime}$-associated tag at
$x_0^{\prime}.$ Set $W_{x^{\prime}_{0}}^{\prime} = W_{x_{0}} = I. $  Then
$U_{x_{0}^{\prime}}^{\prime} = V_1 U_{x_{0}} V_2$ and, therefore, for $x \in X,$
$W_{x^{\prime}}^{\prime} =  U_{x_{0}^{\prime}}^{\prime^{\ast}}
U_{x^{\prime}}^{\prime}=V_2^{\ast} U_{x_{0}}^{\ast} V_1^{\ast} V_1 U_x V_2 =
V_2^{\ast} W_x V_2.$ So $\mathbf{W}$CUE$\mathbf{W}^{\prime}$ via $V_2.$
\vskip0.2in
(iii) $\Rightarrow$ (ii) is trivial.\\
(ii) $\Rightarrow$ (i), Let $\mathbf{T}  = (x_0, U_{x_{0}}, \mathbf{W})$ and 
$\mathbf{T}^{\prime} = (x_0^{\prime}, U_{x_{0}^{\prime}}^{\prime},
\mathbf{W}^{\prime})$ be the $U$-associated tag at $x_0$ and
$\mathbf{U}^{\prime}$-associated tag at $x_0^{\prime}$ respectively with
$\mathbf{W}$CUE$\mathbf{W}^{\prime}.$ Then by Remark \ref{rem2.5}(ii), there
exist a $V \in \mathcal{U}(\mathcal{H})$ and a bijective function on $X
\backslash \{x_0\}$ onto $X \backslash \{x_0^{\prime}\},$ say $x \rightarrow
x^{\prime}$ such that $W_{x^{\prime}}^{\prime} = V^{\ast} W_x V$  for $x \in X
\backslash \{x_0\}.$

Set $V_1 = U_{x_{0}^{\prime}}^{\prime} V^{\ast} U_{x_{0}}^{\ast}.$ Then $V_1 \in
\mathcal{U}(\mathcal{H}).$ Also $U_{x_{0}^{\prime}}^{\prime} = V_1 U_{x_{0}} V.$
Further, for $x \in X \backslash \{x_0\}$
\begin{eqnarray*}
U_{x^{\prime}}^{\prime} &=& U_{x_{0}^{\prime}}^{\prime} W_{x^{\prime}}^{\prime}
= U_{x_{0}^{\prime}}^{\prime} (V^{\ast} W_x V)\\ 
&=& (U_{x_{0}^{\prime}}^{\prime} V^{\ast} U_{x_{0}}^{\ast})(U_{x_{0}} W_{x})V   
      \\
&=& V_1 U_x V.
\end{eqnarray*}
So $\mathbf{U}$ is equivalent to $\mathbf{U}^{\prime}.$
\end{proof}

\begin{thm}\label{thm2.8}
Let $\mathbf{W}$ be a unitary system. 
\begin{itemize}
 \item[(i)] Then there is a unique maximal family
$\mathcal{V}_{\mathbf{W}}=\{\mathbf{V}_{\alpha}: \alpha \in \Lambda  \}$ of
$\mathbf{W}$-MASS's, such that $\mathbf{W}=\underset{\alpha \in \Lambda}{\cup}
\mathbf{V}_{\alpha}.$ If each $W_y$ in $\mathbf{W}$ has simple eigenvalues,
then $\mathbf{V}_{\alpha}$'s are mutually disjoint.

\item[(ii)] Let $\mathbf{W}^{\prime}$ be a unitary system and
$\mathcal{V}_{\mathbf{W}^{\prime}} = \{\mathbf{V}_{\beta}^{\prime}: \beta \in
\Lambda^{\prime}\},$ the family of $\mathbf{W}^{\prime}$-MASS's as in (i). Then
$\mathbf{W}$CUE$\mathbf{W}^{\prime}$ via $V$ iff there is a bijective map on
$\Lambda$ to $\Lambda^{\prime},$ say, $\alpha \rightarrow \alpha^{\prime},$ such
that $\mathbf{V}_{\alpha}$CUE$\mathbf{V}^{\prime}_{\alpha^{\prime}}$ via $V.$
\end{itemize}
\end{thm}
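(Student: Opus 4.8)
The plan is to reduce everything to the structure of a single commuting family of unitaries. For part (i), start by observing that the collection of all AUS's contained in $\mathbf{W}$ is non-empty (each singleton $\{W_y\}$ is an AUS, since a single unitary with trace $0$ and $\tr(W_y^\ast W_y)=d$ trivially commutes with itself) and is partially ordered by inclusion. Every chain has an upper bound — its union is still an AUS because commutativity is a condition on pairs — so by Zorn's Lemma every AUS sits inside a maximal one, i.e.\ inside a $\mathbf{W}$-MASS. In particular each $W_y$ lies in some $\mathbf{W}$-MASS, so the family $\mathcal{V}_{\mathbf{W}}$ of \emph{all} $\mathbf{W}$-MASS's covers $\mathbf{W}$; this family is ``maximal'' in the obvious sense and is unique because it is defined intrinsically (it is literally the set of all maximal abelian subsystems). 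For the last sentence of (i): suppose each $W_y$ has simple (i.e.\ non-degenerate) spectrum and suppose $W_y$ lies in two MASS's $\mathbf{V}_\alpha,\mathbf{V}_\beta$. Anything commuting with $W_y$ is a polynomial in $W_y$ — more precisely, since $W_y$ is normal with $d$ distinct eigenvalues, its commutant is the abelian algebra generated by $W_y$ — so $\mathbf{V}_\alpha\cup\mathbf{V}_\beta$ is still a commuting set, hence (being a subset of $\mathbf{W}$) an AUS containing both, and maximality forces $\mathbf{V}_\alpha=\mathbf{V}_\beta$. Thus distinct MASS's share no element, i.e.\ they are mutually disjoint.

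For part (ii), the ``if'' direction is essentially formal: if $\alpha\mapsto\alpha'$ is a bijection $\Lambda\to\Lambda'$ with $\mathbf{V}_\alpha\,\mathrm{CUE}\,\mathbf{V}'_{\alpha'}$ via a \emph{common} $V$, then conjugation by $V$ maps $\bigcup_\alpha \mathbf{V}_\alpha=\mathbf{W}$ onto $\bigcup_{\alpha'}\mathbf{V}'_{\alpha'}=\mathbf{W}'$, which is exactly $\mathbf{W}\,\mathrm{CUE}\,\mathbf{W}'$ via $V$. The real content is the ``only if'' direction. Assume $\mathbf{W}'=V^\ast\mathbf{W}V$. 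Conjugation by $V$ is a $\ast$-algebra automorphism of $\mathcal{B}(\mathcal{H})$, hence it preserves commutativity in both directions and sends subsets of $\mathbf{W}$ to subsets of $\mathbf{W}'$ bijectively. Therefore it carries AUS's contained in $\mathbf{W}$ to AUS's contained in $\mathbf{W}'$ and, because the correspondence and its inverse both preserve inclusion, it carries \emph{maximal} ones to maximal ones. So $\mathbf{V}\mapsto V^\ast\mathbf{V}V$ is a bijection from the set of $\mathbf{W}$-MASS's onto the set of $\mathbf{W}'$-MASS's; calling this map $\alpha\mapsto\alpha'$ (i.e.\ $\mathbf{V}'_{\alpha'}:=V^\ast\mathbf{V}_\alpha V$), we have by construction $\mathbf{V}_\alpha\,\mathrm{CUE}\,\mathbf{V}'_{\alpha'}$ via the single fixed $V$, as required.

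The one point that needs care — and which I expect to be the main (if modest) obstacle — is the claim underlying both the disjointness statement in (i) and the ``inclusion-preserving in both directions'' step in (ii): namely that a MASS is genuinely determined by its commutation relations inside $\mathbf{W}$, so that ``maximal abelian subset of $\mathbf{W}$'' behaves well under the automorphism. This is clean because maximality is phrased purely in terms of the partial order of AUS-subsets of $\mathbf{W}$, and a bijection of $\mathbf{W}$ onto $\mathbf{W}'$ that preserves the pairwise-commuting relation automatically preserves that partial order; one should just state explicitly that $\mathrm{Ad}_{V^\ast}$ is such a bijection. I would also remark, for the simple-spectrum case, that the hypothesis is used \emph{only} to identify the commutant of $W_y$ with the algebra it generates — without it a single $W_y$ can lie in several MASS's (the degenerate-eigenvalue directions give ``room'' for inequivalent commuting completions), so the MASS's need not be disjoint, and the theorem correctly only asserts the covering property in general.
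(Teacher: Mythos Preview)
Your argument is correct and follows essentially the same route as the paper: define $\mathcal{V}_{\mathbf{W}}$ as the set of maximal elements in the inclusion-ordered family of AUS's inside $\mathbf{W}$, use the abelianness of the commutant of a simple-spectrum unitary for the disjointness claim, and for (ii) push MASS's forward under $\mathrm{Ad}_{V^{\ast}}$ and invoke uniqueness. The only cosmetic excess is the appeal to Zorn's Lemma---since $\mathbf{W}$ is finite (its size is at most $d^2-1$), maximal elements exist trivially.
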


\begin{proof}
 \begin{itemize}
  \item[(i)]  Let $\mathbf{W} = \{W_y: y \in Y \}.$ For
any $y \in Y,$ $\{W_y\}$
is an AUS $\subset \mathbf{W}.$ Let $\mathcal{W} = \{\mathbf{A} : \mathbf{A}
\subset \mathbf{W} \,\mbox{and} \, \mathbf{A} \, \mbox{is an AUS} \} $ and order
it by inclusion. Then $\mathcal{V}_{\mathbf{W}}$ is made up of maximal elements
of $\mathcal{W}.$ The second part follows from elementary Linear Algebra as
indicated in Remark 2.9 (iv) below.

\item[(ii)] Suppose $\mathbf{W}$CUE$\mathbf{W}^{\prime}$ via $V.$ Then
$\{\mathbf{\widehat{V}_{\alpha}} =\{V^{\ast} AV : A \in \mathbf{V}_{\alpha} 
\}, \alpha
\in \Lambda \}$ is a  maximal family of $\mathbf{W}^{\prime}$-MASS's with
$\mathbf{W}^{\prime} =
\underset{\alpha \in \Lambda}{\cup} \widehat{\mathbf{V}}_{\alpha}.$ So, by
uniqueness, each $\widehat{\mathbf{V}}_{\alpha}$ is some unique
$\mathbf{V}_{\beta}^{\prime}.$ Set $\beta = \alpha^{\prime}.$ On the other hand,
each $\mathbf{V}_{\beta}^{\prime}$ is some unique
$\widehat{\mathbf{V}}_{\alpha}.$ So the map $\alpha \rightarrow
\alpha^{\prime}$ is bijective on $\Lambda$ to $\Lambda^{\prime}.$ The converse
part is trivial.
 \end{itemize}
\end{proof}


\begin{rem}\label{rem2.9}
\begin{itemize}
 \item []

\item [(i)] In view of Theorem \ref{thm2.7}, we may add a sixth condition to
Theorem 1 of \cite{Wer} viz., collection of tagged unitary systems as follows:\\
Tagged unitary systems $\mathbf{T},$ i.e., any arbitrarily fixed $x_0 \in X,$ a
unitary $U_{x_{0}} \in \mathcal{U} (\mathcal{H})$ and unitaries $\{W_x : x \in
X, x \neq x_0  \}$ such that $\tr \, W_x =0 = \tr \, (W_{x}^{\ast} W_y)$ for
$x,y, x \neq y$ in $X \backslash \{x_0\}.$

\item [(ii)] Theorem \ref{thm2.8} says that, to within CUE, we may think of
$\mathcal{V}_{\mathbf{W}}$ as an {\it invariant for a unitary system
$\mathbf{W}.$}

\item [(iii)] The role of Hadamard matrices in these
invariants has already been indicated above.  To elaborate a bit, for each
$\alpha \in \Lambda,$ there is a unitary $U_{\alpha}$ and a (partial) Hadamard
$s_{\alpha} \times d$ matrix $H_{\alpha}$ with $s_{\alpha} =
\,\,\mbox{size}\,\,\mathbf{V}_{\alpha}$ such that $\mathbf{V}_{\alpha}$ consists
of operators of the type $U_{\alpha} D U_{\alpha}^{\ast},$ $D$ is a diagonal
matrix whose diagonal forms a row of $H_{\alpha}.$ The ordering of rows
corresponds to that of operators in $\mathbf{V}_{\alpha}.$ To within that
$H_{\alpha}$ is unique upto a permutation of columns, and the corresponding
$U_{\alpha}$'s will undergo changes accordingly. For each $\alpha \in \Lambda,$
the augmented matrix $\widetilde{H}_{\alpha}$ formed by adding a top row of all
$1$'s is also a Hadamard matrix and it arises from the
$\widetilde{\mathbf{W}}$-MASS  $\widetilde{\mathbf{V}}_{\alpha} =
\mathbf{V}_{\alpha} \cup \{I\}$ with same $U_{\alpha}$ in force.

\item[(iv)] It is now clear from (iii) above that if each $W_y$ in $\mathbf{W}$
has simple eigenvalues, then $\mathbf{V}_{\alpha}$'s are
mutually disjoint. This happens for the case $d=2$ and $d=3$ but may not be so
for larger $d$ because the requirement is that eigenvalues of $W_y$ lie on the
unit circle and (counted with multiplicities) add to zero. In the last section
on examples for $d>3,$ we give concrete situations.  It is as if there is a fan
of these subsets $\mathbf{V}_{\alpha}$'s (possibly overlapping) hinged at $I$ 
and, accordingly, {\it a
fan of abelian subspaces of} $\mathcal{B}(\mathcal{H})$ (possibly overlapping),
hinged at the linear span of  $I.$

\item[(v)] A $\mathbf{W}$-MASS of size $d-1$ together with $I$ generates a
maximal abelian subalgebra of $M_d,$ in short, a MASA in $M_d.$ Theory of
orthogonal MASA is well developed by \cite{Ha}, \cite{Wei}, \cite{Ch}. In fact,
\cite{Ch} even defines an entropy $h(A/B)$ between a pair $(A,B)$ of MASA's and
proves that $A,B$ are orthogonal if and only if $h(A/B)$ takes the maximum
value, and then the value is $\log \,d.$
\end{itemize}
\end{rem}

{\bf Definitions and Discussion 2.10.}\,\,\,{(Fan representation \&
Hadamard fans).}

\begin{itemize}
 \item[(i)] In view of item 2.9 (iv) above, we call $\mathcal{V}_{\mathbf{W}}$
in Theorem 2.8 (i), the {\it fan representation of} $\mathbf{W}.$ One can
figure out the $\mathbf{W}$-MASS fan representation through a common eigenvector
system approach. It is neat when eigenvalues of  each $W_x$ in $\mathbf{W}$ are
simple and becomes quite involved when some of them are multiple. 

 \item[(ii)] The family $H_{\mathbf{W}} = \left \{H_{\alpha} : \alpha \in
\Lambda   \right \}$ facilitated as in item 2.9 (iii) above, will be called
the {\it Hadamard fan} of $\mathbf{W}.$ We
note that if $\mathbf{W}$ and $\mathbf{W}^{\prime}$ are unitary systems with
$\mathbf{W}$CUE$\mathbf{W}^{\prime}$ then their Hadamard fans are the same to
within a labelling of $\Lambda$ and permutation of rows and columns of
$H_{\alpha}.$

\item[(iii)] Let $\mathbf{U}$ be a UB and $\mathcal{W} = \{\mathbf{W} :
\mathbf{T}= (x_0, \mathbf{U}_{x_{0}}, \mathbf{W})$ is the tag of
$\mathbf{U}\,\, \mbox{at}\,\,x_0 \}$ and $\mathcal{V} = \left \{
\mathcal{V}_{\mathbf{W}} : \mathbf{W \in \mathcal{W}}\right \}.$ Then
$\mathcal{V}$ will be called the {\it fan system} of $\mathbf{U}.$ We note that
it follows from Theorem 2.7 and Theorem 2.8 that $\mathcal{V}$ is an invariant
for $\mathbf{U}$ in the sense that if $\mathbf{U}^{\prime}$ is a UB and
$\mathcal{V}^{\prime}$ is its fan system then $\mathbf{U} \sim
\mathbf{U}^{\prime}$ if and only if to within CUE  $\mathcal{V} \cap
\mathcal{V}^{\prime} \neq
\phi$ if and only if  to within CUE $\mathcal{V} = \mathcal{V}^{\prime}.$

\item[(iv)] Let $\widetilde{H}_{\mathbf{W}} = \{\widetilde{H}_{\alpha} : \alpha
\in \Lambda\}.$ We call $\widetilde{\mathbf{H}} = \{\widetilde{H}_{\mathbf{W}} :
\mathbf{T} = (x_0, U_{x_{0}}, \mathbf{W})$ is the tag of 
$\mathbf{U}\,\,\mbox{at}\,\, x_0   \}$ the {\it Hadamard fan system} of
$\mathbf{U}.$ We note that if $\mathbf{U}$ and $\mathbf{U}^{\prime}$ are UB's
with
Hadamard fan systems $\mathbf{H}$ and $\mathbf{H}^{\prime}$ respectively, and if
$\mathbf{U} \sim \mathbf{U}^{\prime}$ then $\mathbf{H} = \mathbf{H}^{\prime}.$
The converse does not hold.
\end{itemize}

\vskip0.1in
{\bf 2.11.\,\, Maximally entangled state bases:}  The question that triggered
this
paper, in fact, is the following
one in the context of maximally entangled states (MES) with phases. How to
distinguish pairwise orthogonal systems of MES using local qantum operations 
supplemented by classical communication?

If one can figure out sets of pairwise orthogonal MES, locally unitarily
connected up to global phases to the Bell basis  then the task
of distinguishing the states from the aforesaid sets is equivalent to that of
distinguishing locally the Bell states.\\

(a) We now put the question in the language used in the beginning of this
section. Let
$\{|\Psi_x \rangle : x \in X\}$ be an orthonormal basis in $\mathcal{H} \otimes
\mathcal{H}$ consisting of MES only. Do there exist unitaries $V_1, V_2 \in
\mathcal{U}(\mathcal{H}),$ a bijective function $g$ on $X$ to itself and a
function $f$ on
$X$ to $S^1$ such that $|\psi_{g(x)}\rangle = f(x) (V_1 \otimes V_2)(U_{x}
\otimes I) \Omega,$ $x \in X$ where $\{U_x : x \in X\}$ is the system $\{U_{mn}
: m, n \in Y_d\}$ as explained in Example 3.1(vii) in the next section. 

(b) In view of the item  2.1 (v)(a), there exists a system $\{V_x : x \in X \}$
of mutually orthogonal unitaries in $\mathcal{U}(\mathcal{H})$ such that
$|\psi_x \rangle = (V_x \otimes I) \Omega$ for  $x \in X$ and further, by item
2.1(vii),  for $x \in X,$  $(V_1 \otimes V_2)(U_x \otimes I) \Omega$ 
\begin{eqnarray*}
 &=& (V_1 \otimes I) (I \otimes V_2) (U_x \otimes I) \Omega\\
&=& (V_1 \otimes I) (U_x \otimes I) (I \otimes V_2)  \Omega\\
&=&  (V_1 \otimes I) (U_x \otimes I) (V_2^t \otimes I) \Omega = (V_1 U_x V_2^t
\otimes I) \Omega.
\end{eqnarray*}
Now $\widetilde{V}_2 = V_2^t$ is a unitary if $V_2$ is so. So the question
reduces to: Do there  exist unitaries $V_1, \widetilde{V}_2 \in
\mathcal{U}(\mathcal{H})$ and a function $f:X \rightarrow S^1$ such that
$V_{g(x)} = f(x) V_1 U_x \widetilde{V}_2,$ $x \in X.$

In the terminology of item 2.1(ii): Does there exist a function $\widetilde{f}$
on $X$ to $S^1$ such that $\{\widetilde{f}(x) V_x : x \in X\}$ is equivalent to
$\{U_x : x \in X\}.$ We shall utilise the results and methods given above to
answer this.

\vskip0.1in
\noindent{\bf Definition 2.12.} 
We call two unitary bases $\mathbf{U}$ and  $\mathbf{U}^{\prime}$ {\it
phase-equivalent} if there exists a function $\widetilde{f}$ on $X$ to $S^1$
such that $\{\widetilde{f}(x) U_x^{\prime} : x \in X  \}$ is equivalent
to $\{U_x : x \in X  \}.$
\vskip0.1in

\noindent{\bf Definition 2.13.}  
For subsets $\mathcal{F}$ and $\mathcal{G}$ of $\mathcal{B}(\mathcal{H})$ we
say $\mathcal{F}$ is {\it phase-collectively-unitarily equivalent} to
$\mathcal{G}$ and write $\mathcal{F}$PCUE$\mathcal{G}$  if there exists a
function $f$ on $\mathcal{F}$ to $S^{1}$ such that $f\mathcal{F}$ CUE
$\mathcal{G}$ where, $f \mathcal{F} = \{f(A) A : A \in \mathcal{F}\}.$

\vskip0.1in
\noindent{\bf Remark 2.14.} Let $\mathbf{W} = \{ W_y : y \in Y\}$ be a
unitary system and $h : Y \rightarrow
S^{1}$ be a function. Then
\begin{itemize}
 \item[(i)] $h \mathbf{W} = \{h(y) W_y : y \in W \}$ is a unitary system, 
 \item[(ii)] $\mathbf{W}$ is abelian if and only if $h\mathbf{W}$ is abelian,
 \item[(iii)] $\mathbf{V} = \{W_y : y \in Z \}$ with $Z \subset Y$ is a
$\mathbf{W}$-MASS if and only if $(h|Z) \mathbf{V}$ is a $\mathbf{W}$-MASS, and

\item[(iv)] $h \mathcal{V}_{\mathbf{W}} = \mathcal{V}_{h\mathbf{W}}.$   
\end{itemize}
 We can now have the obvious generalizations of Theorem 2.7, Theorem 2.8, item
2.9 and item 2.10 with obvious modifications of the corresponding proofs. Here
is an illustration which will be strengthened further by examples in the next
section.

\noindent{{\bf Theorem 2.15}.}
Let $\mathbf{U}, \mathbf{U}^{\prime}$ be unitary bases for $\mathcal{H}.$ Then
the following are equivalent.
\begin{itemize}
 \item[(i)] $\mathbf{U}$ is phase equivalent to  $\mathbf{U}^{\prime}.$
\item[(ii)] For some $\mathbf{U}$-associated tag $\mathbf{T}$ and some
$\mathbf{U}^{\prime}$-associated tag $\mathbf{T}^{\prime},$\\
$\mathbf{W}$PCUE$\mathbf{W}^{\prime}.$
\item[(iii)] For each $\mathbf{U}$-associated tag $\mathbf{T},$ there
is a $\mathbf{U}^{\prime}$-associated tag $\mathbf{T}^{\prime}$ such that
$\mathbf{W}$ PCUE $\mathbf{W}^{\prime}.$ 
\item[(iv)] $\mathbf{U}$ and $\mathbf{U}^{\prime}$ have the same fan systems to
within PCUE. 
\end{itemize}

\section{Examples}
The purpose of this section is to illustrate concepts and results in Section 2
and throw more light on them.
\vskip0.1in
\noindent{\bf Example 3.1. (Shift and multiply)\,\,}
 This is based on (\cite{VW}, III.A) or (\cite{Wer}, Proposition 9). Let $Y_d =
\{1,2,\ldots, d  \}.$ For $m \in Y_d,$ let $H^{(m)} = [H_{jk}^{m}]$ be a complex
$d \times d$ Hadamard matrix, i.e. $H^{(m)}$ $H^{(m)\ast} = dI_d$ and
$|H_{jk}^{m}| =1$ for all $m,j,k.$ Let $\lambda$ be a latin square, i.e.
$\lambda$ is a map on $Y_d \times Y_d$ to $Y_d$ satisfying $k \rightarrow
\lambda (k,\ell)$ and $k \rightarrow \lambda (\ell, k)$ are injective for every
$\ell.$ We shall write $e_k$ as $|k \rangle$ for $k \in Y_d.$

For $m,n \in Y_d,$ let $U_{mn}$ (or, at times, also written as $U_{m,n}$ or
$U_{(m,n)}$) be the operator which takes $|k\rangle$ to $H_{mk}^n$ $|\lambda
(n,k)\rangle,$ $k \in Y_d.$ 
\vskip0.1in
\begin{itemize}
\item[(i)] Then $\mathbf{U} = \{U_{m,n} : m, n \in Y_d \}$ is a UB. We note that
its indexing set is $X=Y_d \times Y_d.$ Further, $U_{m,n} = I$ if and
only if $\lambda(n,k)=k$ and $H_{mk}^n = 1$ for each $k \in Y_d.$ In this case
$\mathbf{W} = \mathbf{U}\backslash \{I\}$ is a unitary system.
\vskip0.1in
\item[(ii)] For $(m,n),(m^{\prime},n^{\prime}) \in X$ we say $(m,n)$
{\it commutes }
with $(m^{\prime},n^{\prime})$ and write it as $(m,n) \Delta
(m^{\prime},n^{\prime})$ if $U_{m,n}$ commutes with $U_{m^{\prime},n^{\prime}}.$
We now proceed to obtain maximal commuting subsets of $\mathbf{U}$ (to be
called $\mathbf{U}$-MASS's) or,
equivalently, of $X.$
\vskip0.1in
Let $(m,n),(m^{\prime},n^{\prime}) \in X.$ Then $U_{mn} U_{m^{\prime}
n^{\prime}} = U_{m^{\prime}n^{\prime}} U_{mn}$ if and only if for $k \in Y_d,$
$U_{mn}\left (H_{m^{\prime}k}^{n^{\prime}} |\lambda(n^{\prime},k) \rangle 
\right )  = U_{m^{\prime}n^{\prime}}\left (H_{mk}^{n} |\lambda(n,k) \rangle 
\right ) $ if and only if for $k \in Y_d,$ $ H_{m, \lambda(n^{\prime},k)}^{n} 
H_{m^{\prime}k}^{n^{\prime}}  | \lambda (n, \lambda (n^{\prime},k) )\rangle
\!\!= \!\! H_{m^{\prime},  \lambda(n,k)}^{n^{\prime}}$ $H_{mk}^n | \lambda
(n^{\prime}, \lambda(n,k)) \rangle $ if and only if 
$$\lambda (n,\lambda (n^{\prime}, k)) = \lambda (n^{\prime},
\lambda (n,k)), \,\,k \in Y_d \,\,\,\,\,\,\,\left [\lambda_{n,n^{\prime}} \right
]$$ 
and $H_{m, \lambda(n^{\prime}, k)}^{n}  H_{m^{\prime}k}^{n^{\prime}}  =
H_{m^{\prime}, \lambda(n, k)}^{n^{\prime}}  H_{mk}^{n},$ $k \in Y_d \,\,\,\,\,
\left [H_{(m,n),(m^{\prime}, n^{\prime})}\right ].$ \\
We call these conditions {\it Latin criss-cross} and {\it Hadamard
criss-cross} respectively.
\vskip0.1in

\item[(iii)] {\bf Latin squares.\,\,} A latin square $\lambda$ may be called a
quasigroup $L$ in the
sense that the binary operation `.' on $L$ given by $a.b=\lambda(a,b)$
satisfies the condition that, given $s,t \in L,$ the equations $x.s=t$ and
$s.y=t$ have unique solutions in $L;$ one may see, for instance, the book by
Smith \cite{jdhs} for more details. Keeping this in mind we introduce a few
notions for $\lambda.$\\
(a) An element $a$ of $L$ will be called a {\it left identity} if $a.b=b$ for
$b$ in
$L.$ We note that a left identity, if it exists, is unique. Similar remarks
apply to the notion and uniqueness of {\it right identity.}\\
(b)  $\lambda$ is called {\it associative} if `$\cdot$' is associative.\\
(c) Elements $a,b$ in $L$ will be said to be {\it commuting} if $a.b=b.a.$\\
(d) The {\it centre} $Z(L)$ of $L$ is $\{a \in L: a.b=b.a \,\,\mbox{for
each}\,\, b\,\, \mbox{in}\,\, L\}.$
\vskip0.1in
We shall mainly consider latin squares arising from a group $G$ (with
multiplication written as juxtaposition and identity written as $e$) or {\it
right
divisors} or {\it left divisors} in the group $G$ as follows:\\
(e) $a.b=ab,$\\
(f) $a.b=ab^{-1},$\\
(g) $a.b=a^{-1}b,$ for $a,b$ in $G.$
\vskip0.1in
Direct computations give the following. \\
(h) A right (respectively, left) divisor latin square has $e$ as right
(respectively, left) identity. Further,  any such latin square has both right
and left identity  if and only if $a^2 = e$ for each $a$ in
$G$ if and only if $L$ is the same as $G.$\\
(i) Any such $\lambda$ is associative  if and only if $G$ and $L$ coincide.\\
(j) Elements $a,b$ in any such $L$ commute if and only if  $(ab^{-1})^2 = e.$\\
(k) In particular, if the number of elements in $G,$ $|G|$ is an odd number $\ge
3$ then no two distinct
elements in any such $L$ commute.\\
We may have {\it twisted version} of (e), (f) and (g) as follows and then draw
the same conclusions as above for them.\\
(l) $a.b=ba,$\\
(m) $a.b=b^{-1} a,$\\
(n) $a.b=ba^{-1},$ for $a,b$ in $G.$\\
Let $\lambda^{-1}$ be the latin square $\mu$ defined by $\mu (a, \lambda
(a,b))=b$ for $a,b$ in $L.$\\
(o) Direct computations give that $\mu^{-1}=\lambda.$ We may say that
$(\lambda, \mu)$ is an {\it inverse-pair}. We note that latin squares listed
above may then be inverse-paired as $((e),(g)),$ $((f),(m))$ and $((l), (n)).$

\item[(iv)] {\bf Latin criss-cross.\,\,} Item (iii) above immediately gives the
following facts.
\vskip0.1in
\noindent (a) Latin criss-cross for an associative latin square reduces to
$n.n^{\prime} = n^{\prime}. n.$\\
(b) If $\lambda$ has a right identity then Latin criss-cross implies that
$n.n^{\prime}=n^{\prime}.n.$ In particular, if $\lambda$ is a right divisor
latin square and
$|L|$ is an odd number $\ge 3,$ then Latin criss-cross reduces to
$n=n^{\prime}.$\\
(c) Direct computations give that if $\lambda$ is a left divisor latin square
then  Latin criss-cross reduces  to $nn^{\prime}=n^{\prime}n.$ \\
(d) Suppose $\lambda$ is a right divisor latin square. Then Latin criss-cross
holds if  and only if  $(n^{\prime} n^{-1})^2 =e,$
$n^{\prime}n=nn^{\prime}$ and $n^{\prime}n^{-1} \in Z(G).$ For the sake of
illustration we give details. We first note
that Latin criss-cross holds if and only if $n^{\prime} kn^{-1} =
nkn^{\prime -1}$
for all $k.$ Taking $k=e, n, n^2$ this implies $n^{\prime} n^{-1}=nn^{\prime
-1},$
$n^{\prime} = n^2 n^{\prime -1}$ and $n^{\prime}n=n^3 n^{\prime -1}$
which, in turn,
implies $(n^{\prime} n^{-1})^2=e,$ $n^{\prime 2} = n^2$ and $n^{\prime}n = n^3
n^{\prime -1}$ and thus, $(n^{\prime} n^{-1})^2=e,$ $n^{\prime 2} = n^2,$
$n^{\prime}n=nn^{\prime}.$ This is equivalent to $n^{\prime}n = nn^{\prime},$ 
$n^2 = n^{\prime 2}$ as also to $(n^{\prime} n^{-1})^2=e,$ $n^{\prime}n =
nn^{\prime}.$ Therefore, Latin criss-cross is equivalent to $n^{\prime}n =
nn^{\prime},$ $n^{\prime} n^{-1} = nn^{\prime -1},$ $n^{\prime} kn^{-1} =
nkn^{\prime -1}$
for all $k,$ i.e., $n^{\prime}n =nn^{\prime},$ $n^{\prime} n^{-1} = nn^{\prime
-1} =
n^{\prime -1}n,$ $n^{\prime} n^{-1}(nk) = (nk) n^{\prime -1} n$ for all $k,$
\,\,
i.e., $n^{\prime} n=nn^{\prime},$ $n^{\prime} n^{-1}=nn^{\prime -1},$
$(n^{\prime}
n^{-1})\ell = \ell (n^{\prime} n^{-1})$ for all $\ell$
i.e., $n^{\prime}n = nn^{\prime},$ $(n^{\prime} n^{-1})^2=e,$ $n^{\prime}
n^{-1} \in Z(G).$\\
(e) If $\lambda$ is a right divisor latin squre and $Z(G)$ consists of the
identity then Latin criss-cross reduces to $n=n^{\prime}.$\\
(f) Suppose $G$ is an abelian group with identity written as $0$ and $|G|>2.$
Then (c)
above gives that Latin criss-cross is satisfied automatically for the left
subtraction latin square. Also (d) above gives that Latin criss-cross for the
right subtraction latin square is satisfied if and only if $n+n = n^{\prime}
+n^{\prime}.$ As already noted in (b) above, it is possible for $n \neq
n^{\prime}$ only if $|G|$ is even and in that case, for $0 \neq g \in G,$
order of $g$ even, say, $2r,$  $n=0,$  $n^{\prime} = rg
(=\underset{r \,\,\mbox{times}}{\underbrace{g+\cdots+g}} )$ satisfy the
requirement. We now assume that $|G|$ is even.  We can divide $L$ into mutually
disjoint equivalence classes $L_h$, indexed by the set $S=\{h=g+g\,\,:\,\, g \in
G\}$, given by $L_h=\{n \in G: n+n=h\}.$ We note that $L_h = \cup \{ L_0 + g :
g+g=h\}$ and Latin criss-cross is satisfied if and only if $n,n^{\prime}$ both
belong to some $L_h.$ In case $\exp G=2,$ we have $L=G,$ $S=\{0\},$ $L_0=L$ and,
therefore, Latin criss-cross is automatically satisfied. On the other hand, if
$\exp G > 2,$ then $\{0\} \subsetneqq S, \{0\} \subsetneqq L_0 \subsetneqq L$
and $|L_0| \leq |L_h|$ for $h \in S.$ 

\vskip0.1in
\item[(v)] {\bf Hadamard criss-cross.}\\
(a) We first consider the case $n=n^{\prime}.$  Hadamard criss-cross becomes
$$H_{m,n.k}^{n} H_{m^{\prime},k}^{n} =  H_{m^{\prime}, n.k}^{n}
H_{m,k}^{n}\,\,\mbox{for each}\,\,k, $$
i.e., $\frac{H_{m^{\prime},n.k}^{n}}{H_{m, n.k}^{n}} =
\frac{H_{m^{\prime},k}^{n}}{H_{m,k}^{n}}$ for each $k.$
\vskip0.1in
(b) Suppose $L$ is a left divisior latin square. Then for $n=n^{\prime}=e,$
Hadamard criss-cross becomes an identity and
therefore,
$m, m^{\prime}$ can be arbitrary.  This gives rise to a full-size
$\mathbf{U}$-MASS.
\vskip0.1in
(c) We now consider the case when $n=n^{\prime}$ is a must. For  $j, k \in
\mathbb{Z}_d,$ let $H_{j,k}^{n} = (\eta_d)^{jk},$ where $\eta_d = \exp (2 \pi
i/d).$ Hadamard criss-cross is equivalent to
$(m-m^{\prime})(n.k-k)=0~ (\mbox{mod}\,d)$ for all $k.$ If $(n.k - k)$ is co-prime
to $d$ for some $k,$ then Hadamard criss-cross
holds if and only if $m=m^{\prime}.$ This means $(m,n)$ belongs to the unique
$\mathbf{U}$-MASS $\{(m,n)\}$ containing it. In particular, it is so if $n$
is not the left identity for $\lambda$ and $d$ is a prime. We record important
consequences. 
\vskip0.1in
(d) {\bf Singleton $\mathbf{U}$-MASS's example.\,\,} We consider the
right subtraction latin square coming from
$\mathbb{Z}_3$ and take, for  $n=0,1,$ or $2,H^n=\left [\begin{array}{ccc} 1 & 1
&
1
\\ 1 & \omega & \omega^2 \\ 1 & \omega^2 &  \omega \end{array} \right ].$
\vskip0.1in
By (iv)(b), Latin criss-cross is satisfied if and only if $n^{\prime}=n.$ By (v)
above,  Hadamard criss-cross is satisfied if and only if $\frac{H_{m^{\prime},
n-k}^{n}}{H_{m, n-k}^{n}} = \frac{H_{m^{\prime},k}^{n}}{H_{m,k}^{n}}$ for each
$k.$  For $n=1$ or 2, taking $k=0$ this forces
$\frac{H_{m^{\prime},n}^{n}}{H_{m,n}^{n}}=1$ i.e. $H_{m^{\prime},n}^{n} =
H_{m,n}^{n},$ which, in turn, forces $m^{\prime}=m.$ So $\mathbf{U}$-MASS's are
all of size $1.$
\vskip0.1in
(e) We consider the non-abelian group $G=S_3$ of permutations on $\{a, b, c\}.$
We label the elements of $G$ in any manner by $\{0,1,2,3,4,5\}$ but with $0=e,$
$1= \,\,\mbox{the cycle}\,\, (ab).$ Set $\mathbf{W}= \mathbf{U} \backslash 
\{(0,0)\}.$  Then  $n=1=k$ satisfy the requirement that $(n.k-k)$ is co-prime to $d.$ 
So for $0 \leq m \leq 5,$ $\{(m,1)\}$ is a $\mathbf{W}$-MASS. We 
can label the
cycle (ac) as 3 and cycle  (bc) as 5, then similar arguments give that for $0
\leq m \leq 5,$ $\{(m,3)\}$ and $\{(m,5)\}$ are $\mathbf{W}$-MASS's.  Thus, we 
have 18 
$\mathbf{W}$-MASS's of size one each, say, $F_{1}, F_{2}, \ldots,  F_{18}.$    
Labelling 
$(ab)(bc) = (abc)$ as 2 and $(ab)(ac)=(acb)$ as 4,  we 
have four full-size 
$\mathbf{W}$-MASS's viz., $F_0=\{(m,0): 1 \leq m \leq 5\},$  $F_{19} = \{(0,2), 
(3,2), (0,4), (3,4), (3,0) \},$ $F_{20} = \{(1,2), (4,2), (2,4), (5,4), 
(3,0)\}$ and $F_{21} = \left \{(2,2), (5,2), (1,4),(4,4),\right .$ \\$\left .  
(3,0)\right \}.$ The element $(3,0)$ is present in all these four 
$\mathbf{W}$-MASS's. All other elements belong to a unique $\mathbf{W}$-MASS. 
Figure 1 gives an idea.

\begin{center}
\begin{figure}[ht]
\includegraphics[width=6in]{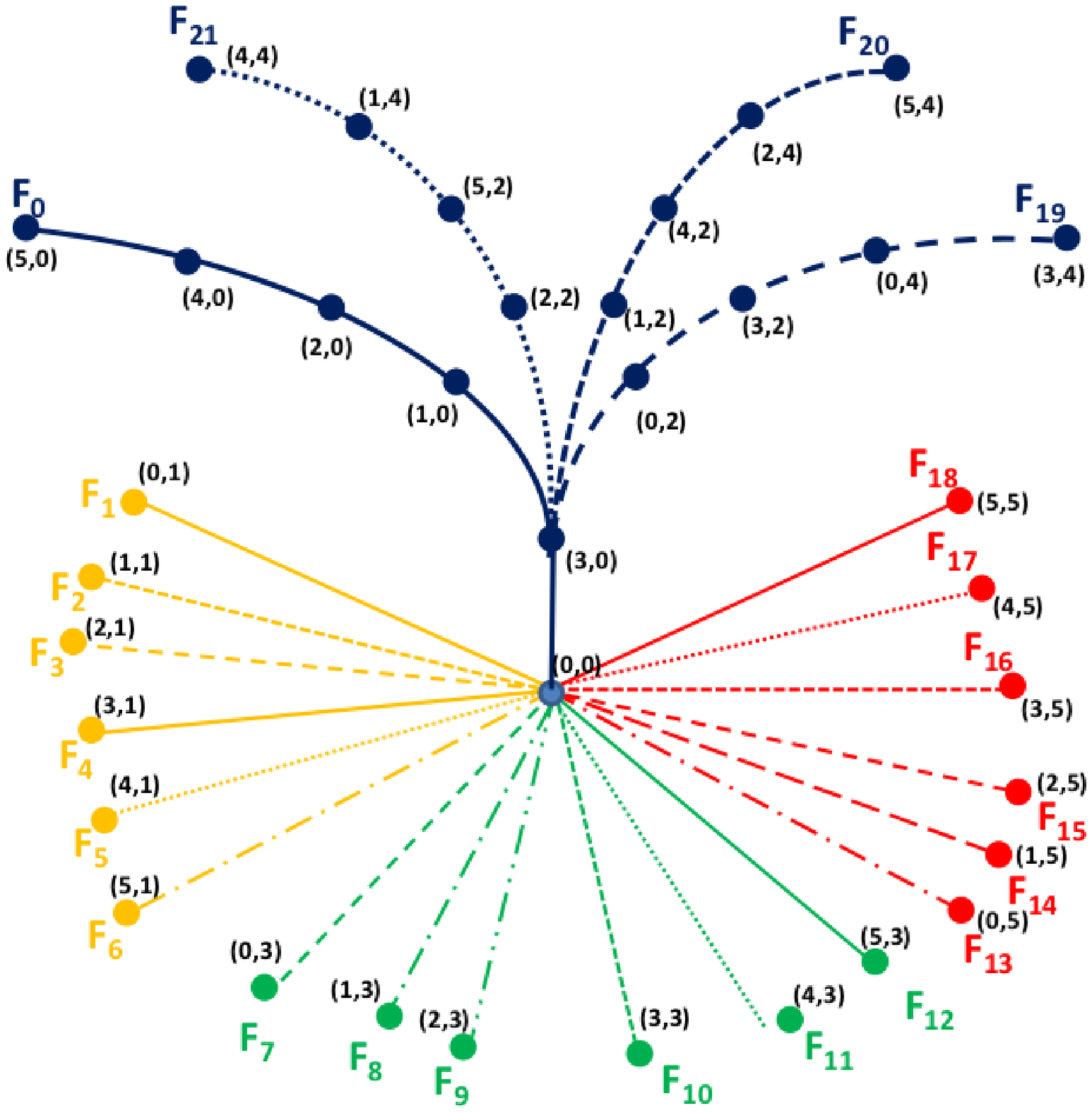}  
\caption{}
\end{figure}
\end{center}

\vskip0.1in
\item[(vi)]  {\bf Tags and Twills \,\,} Contents of Section 2 tell us that it
is $\mathbf{W}$-MASS's for tags of $\mathbf{U}$ that really help us. And
$\mathbf{U}$-MASS's help directly if (a scalar multiple of)  $I$ is in 
$\mathbf{U}$ simply because then apart from (the scalar multiple of) $I$
occurring in all $\mathbf{U}$-MASS's, $\mathbf{W}$-MASS's and
$\mathbf{U}$-MASS's
are same. Let $\mathbf{T} = (x_0, U_{x_{0}}, \mathbf{W})$ be a tag of
$\mathbf{U}.$  As noted in Remark 2.2, for $x,y \in X \backslash \{x_0\},$
$W_x W_y = W_y W_x$ if and only if Twill $\mathbf{T}(x, x_0, y)$ viz., $U_x
U_{x_{0}}^{\ast}
U_y = U_y U_{x_{0}}^{\ast} U_x$ is satisfied. We now figure these out for some
of the
cases considered above.\\
(a) Let $x_0 = (m_0, n_0).$ Then for $k \in Y_d,$ 
$$U_{x_{0}}^{\ast}
|k \rangle = U_{x_{0}}^{-1} |k \rangle = \overline{H_{m_{0},
\mu (n_{0}, k)}} | \mu (n_0, k) \rangle.$$ 
So the Twill is equivalent to $\left [\lambda (n, n_{0}, n^{\prime}) \right ]$ and $\left [H (m,n), (m_0, n_0), (m^{\prime}, n^{\prime})
\right ]$, given by  
$$\lambda (n, \mu (n_0, \lambda (n^{\prime}, k)) = \lambda (n^{\prime}, \mu
(n_{0}, \lambda (n,k)), k \in Y_d,$$ 
and
$$H_{m, \mu (n_{0}, \lambda (n^{\prime}, k))}^{n} \overline{H_{m_{0},
\mu (n_{0}, \lambda (n^{\prime}, k))}^{n_{0}}} H_{m^{\prime},k}^{n^{\prime}}
$$
$$= H_{m^{\prime}, \mu (n_{0}, \lambda (n,k))}^{n^{\prime}}
\overline{H_{m_{0}, \mu (n_{0}, \lambda(n,k))}^{n_{0}}} H_{m,k}^{n},\,\, k \in Y_d,$$
respectively.
\vskip0.1in
We call them  {\it Latin twill} and {\it Hadamard twill} respectively.
\vskip0.1in
We may re-write Hadamard twill in another form as
\begin{eqnarray*}
\lefteqn{H_{m, \mu (n_{0}, \lambda (n^{\prime}, k))}^{n} H_{m_{0},
\mu (n_{0}, \lambda (n, k))}^{n_{0}} H_{m^{\prime},k}^{n^{\prime}}
}\\
&=& H_{m^{\prime}, \mu (n_{0}, \lambda (n,k))}^{n^{\prime}}
H_{m_{0}, \mu (n_{0}, \lambda(n^{\prime},k))}^{n_{0}} H_{m,k}^{n}, k \in Y_d.
\end{eqnarray*}
\vskip0.1in
(b) For latin squares coming from group $G$ as in (iii) (e) above. Latin twill
reduces to 
$$nn_{0}^{-1}n^{\prime} = n^{\prime} n_{0}^{-1} n. $$
For the inverse latin square arising as in (iii)(g) above, it is $n^{-1} n_{0}
n^{\prime -1} = n^{\prime -1} n_{0} n^{-1},$ which on taking inverses, becomes
$$n^{\prime} n_{0}^{-1} n = nn_{0}^{-1} n^{\prime}.$$
Thus the two Latin twills are the same. Interestingly, the Latin twill remains
the same for latin squares arising as in (f), (l), (m), (n) as well. We note an
equivalent useful form of the Latin twill: $(n^{-1}_0 n)(n_0^{-1}
n^{\prime})=(n_0^{-1} n^{\prime})(n_0^{-1}n).$ This Latin twill is satisfied
automatically for abelian groups  $G.$\\
(c) Hadamard twills can be written down which will be quite complicated for
different cases.
\vskip0.1in
 We
consider the case in (v)(d) coming from $\mathbb{Z}_3.$ Then $\lambda \equiv
\mu$ and $\lambda (n,k) = n-k =\mu (n,k)$ for $n,k=0,1,2.$ So Hadamard twill
becomes: for $k \in \{0,1,2\},$
\begin{eqnarray*}
\lefteqn{H_{m, n_{0}-n^{\prime} + k}\,\, H_{m_{0}, n_{0} - n + k}\,\,
H_{m^{\prime},k} } \\
&=& H_{m^{\prime}, n_{0} - n+ k} \,\,H_{m_{0}, n_{0}-n^{\prime}+k}\,\,
H_{m,k},
\end{eqnarray*}

i.e., for $k \in \{0,1,2\},$ 
\begin{eqnarray*}
\lefteqn{ m(n_0-n^{\prime}+k) + m_0 (n_0-n+k)+ m^{\prime} k    }\\
&=& m^{\prime} (n_0-n+k) + m_0 (n_0-n^{\prime}+k) + mk \quad (\mbox{mod}\,3)
\end{eqnarray*}
i.e., for $k \in \{0,1,2\},$ 
\begin{eqnarray*}
\lefteqn{m (n_0 - n^{\prime}) + m_0 (n_0-n) + (m+m_0+m^{\prime}) k    }\\
&=& m^{\prime}(n_0-n)  + m_0 (n_0 - n^{\prime})+ (m+m_0+m^{\prime}) k
\end{eqnarray*}
i.e.,
$$(m-m_0) (n^{\prime}-n_0)-(m^{\prime}- m_0)(n-n_0)= 0  \quad (\mbox{mod}\,3).
$$
This  gives us exactly  $4$ $\mathbf{W}$-MASS's $\{(m_0, k) : k \neq n_0\},$
$\{(j, n_0): j \neq m_0  \},$ $\{(m_0 + 1, n_0+1), (m_0+2, n_0+2)\}$ and
$\{(m_0+1, n_0+2), (m_0+2, n_0+1) \}.$ They are all full-size and mutually
disjoint.
\vskip0.1in
\item[(vii)] {\bf Cyclic  group case.\,\,\,} We now consider the case when
$L$ is the cyclic group $\mathbb{Z}_d = \{0,1,\ldots,d-1\}$ with addition mod
$d$  and for $n  \in
\mathbb{Z}_d,$ $H^n$ is as in (v)(c) above 
(\cite{Iv}, \cite{WF}, \cite{BBRV}, \cite{Ha}, \cite{PDB} and \cite{Sz}). It is
as if we have now taken $X = \mathbb{Z}_d \times \mathbb{Z}_d.$ Then $U_{00}=I.$ 
This is the so-called Bell basis.

The commuting condition becomes
$$mn^{\prime} - m^{\prime}n \equiv 0 (\mbox{mod} \, d).$$
One may see more details in the papers referred to, particularly for $d=p^r,$
where $p$ is a prime and $d=6 \,\,\mbox{or}\,\,10.$
\end{itemize}

We consider $\mathbb{Z}_d$ as a commutative ring with addition and
multiplication moduluo $d$ and set $Y = X \backslash \{(0,0)\}.$  Then
$\mathbf{W}= \{U_x : x \in Y \}$ is a unitary system and $((0,0), I,
\mathbf{W})$ is the $\mathbf{U}$-associated tag at $(0,0).$ 

We proceed to determine various $\mathbf{W}$-MASS's.\\
(a) We note that $(0,0) \Delta x$ for each $x$ in $X.$ We further note
some obvious commuting pairs,viz., $(j, \ell j) \Delta (j^{\prime}, \ell
j^{\prime}),$  $(\ell j, j) \Delta (\ell j^{\prime}, j^{\prime}),$  for
$j,j^{\prime},$ $\ell \in \mathbb{Z}_d.$ We next set 
\begin{eqnarray*}
Y_{\ell} &=& \{(j, \ell j) : j \in \mathbb{Z}_d, j \neq 0 \} \,\,\mbox{for} \,\,
0 \leq \ell \leq d-1,\\
\widetilde{Y}_{\ell} &=& \{(\ell j, j) : j \in \mathbb{Z}_d, j \neq 0 \}
\,\,\mbox{for} \,\, 0 \leq \ell \leq d-1.
\end{eqnarray*}
We note that $\widetilde{Y}_0,$ $Y_{\ell},$ $0 \leq \ell \leq d-1$ are all
distinct. Next $Y_1 = \widetilde{Y}_1$ is the diagonal of $Y,$ whereas $Y_{d-1}
= \widetilde{Y}_{d-1}$ is the antidiagonal of $Y.$   We further note that if 
$k, \ell$ in $\mathbb{Z}_d
\backslash \{0\}$ satisfy $k \ell = 1,$ then 
$\widetilde{Y}_k = Y_{\ell}.$ In particular, if $d$ is a prime, then we may
take, for
$k \in \mathbb{Z}_d \backslash \{0\},$ $\ell = k^{-1};$ in this case
$\widetilde{Y}_0, Y_{\ell},$ $0 \leq \ell \leq d-1$ are mutually disjoint as
well. On the other hand, for a composite $d,$ certain distinct
$Y_{j}$'s do have overlaps.

We denote $\widetilde{Y}_0$ by $Y_{-1}$ and $\widetilde{Y}_{\ell}$ by 
$Y_{-\ell}$ for $2 \leq \ell \leq d-2.$

\vskip0.1in
Each of the sets $Y_{j}$ is a maximal commuting set in $Y$ and accordingly
$\{U_x
: x \in Y_j  \}$ is a $\mathbf{W}$-MASS. Their number is 3 for $d=2$ and
$2(d-1)$ for $d>2.$ Each of the sets $Y_j$ has cardinality $d-1.$ So for $d>3,$
they do overlap for some combinations. Sets  $Y_{-1},$ $Y_0,$ and $Y_{1}$ are
pairwise disjoint. Some combinations of sets may have non-empty pairwise
intersection. For instance, for $2<d = 2 r,$ $(r,r) \in Y_1
\cap Y_{d-1}$ and  $(r,0) \in Y_0 \cap Y_2,$ $(2,0) \in Y_0 \cap Y_r,$ $(2,2)
\in Y_1 \cap Y_{r+1}.$ However, for odd $d,$ $Y_{-1},
Y_{0},Y_1, Y_{d-1}$ are all pairwise disjoint, whereas, for even $d > 2,$ they
are
all pairwise disjoint except for $Y_1,$ $Y_{d-1}$ containing one element, viz.,
($\frac{1}{2}d,$ $\frac{1}{2}d$) in common. 
\vskip0.1in
Finally, for $j$ co-prime with $d,$ we have unique $\mathbf{W}$-MASS's
$Y_{-1}, Y_{0}, Y_{1}, Y_{d-1}$ for $(0,j),(j,0),(j,j)$ and $(j, d-j)$
respectively.

\vskip0.1in
(b) {\bf Move together and stand-alone technique.\,\,} We note that 
$$(m,n) \Delta (m^{\prime},n^{\prime}) \,\,\mbox{iff}\,\, (d-m, d-n) \Delta
(m^{\prime}, n^{\prime}).$$

Thus $(m,n),(d-m, d-n)$ move together in any $\mathbf{W}$-MASS. Now
$(m,n) =
(d-m, d-n)$ if and only if $d$ is even, say, $2r,$ and $m,n \in \{0,r\};$   and
then each of
$(r,0),(0,r),(r,r)$ can be termed as a ``stand alone''. We continue with the
case $d=2r.$

Now $(r,0)\Delta(m^{\prime}, n^{\prime})$ if and only if $rn^{\prime} \equiv 0
(\mbox{mod}\,d)$ if and only if $n^{\prime}$ is even.

Similarly, $(0,r) \Delta(m^{\prime}, n^{\prime})$ if and only if  $m^{\prime}$
is even. 

Next $(r,r) \Delta (m^{\prime}, n^{\prime})$  if and only if
$m^{\prime}-n^{\prime}$ is even. In particular, $(r,0) \Delta (0,r)$ if and only
if $r$ is even if and only if $(r,0) \Delta (r,r)$ if and only if $(0,r)
\Delta
(r,r).$ Finally, $(2,0) \Delta (m^{\prime},n^{\prime})$ if and only if
$n^{\prime} = 0$ or $r,$ $(0,2) \Delta (m^{\prime},n^{\prime})$ if and only if
$m^{\prime} = 0$ or $r,$ and $(2,2) \Delta (m^{\prime},n^{\prime})$ if and only
if $n^{\prime}=m^{\prime}$ or $m^{\prime}+r$ $(\mbox{mod}\,\,d).$

We utilize these observations to compute $\mathbf{W}$-MASS's for $d=4$
and $6.$

\vskip0.1in
(c) $\mathbf{d=4}.$ Move togethers are
\begin{eqnarray*}
 \{(1,0), (3,0)\}, && \{(0,1), (0,3) \} \\
 \{(1,1), (3,3)\} &&  \\
 \{(2,1), (2,3)\}, && \{(1,2), (3,2) \} \\
 \{(3,1),(1,3)\}. && 
\end{eqnarray*}

These are all disjoint. So we can try to extend them to $\mathbf{W}$-MASS's by
adjoining one
of $(2,0), (0,2)$ or $(2,2).$ We get
\begin{eqnarray*}
 \{(1,0), (2,0), (3,0)\}, && \{(0,1), (0.2), (0,3) \} \\
 \{(1,1), (2,2), (3,3)\}, &&  \\
 \{(2,1), (2,3), (0,2)\}, && \{(1,2), (3,2), (2,0) \} \\
 \{(3,1),(1,3), (2,2)\} && 
\end{eqnarray*}
and finally, $\{(2,0),(0,2),(2,2)\}.$

This gives us 7 $\mathbf{W}$-MASS's with overlaps coming  from  $(2,0), (0,2)$
or $(2,2),$ each one thrice. Figure 2 illustrates the situation.

\begin{center}
\begin{figure}[ht]
\includegraphics[width=6.5in]{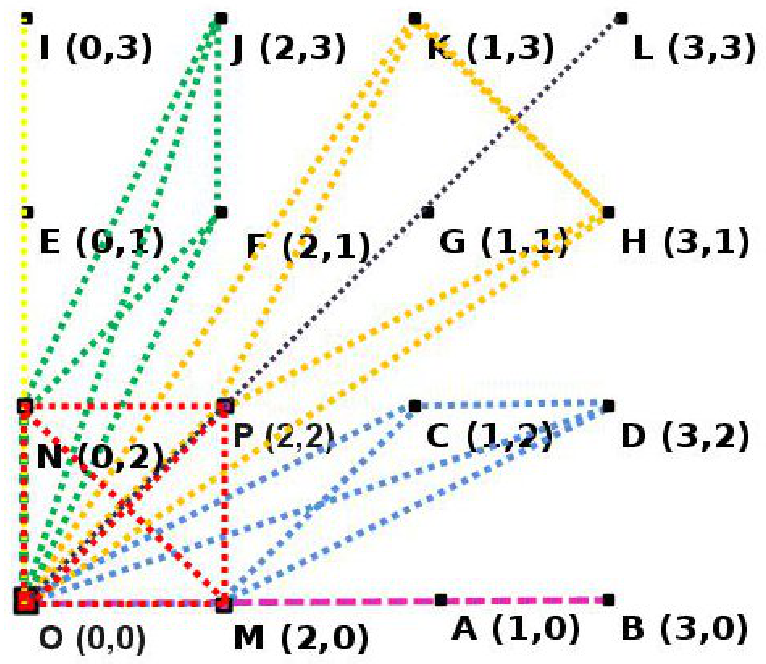}  
\caption{}
\end{figure}
\end{center}

(d) For $d,$ a product  of two co-prime  numbers $f,$ $g \ge 2,$ let $Z_{f,g} =
\{(kf, \ell g) : 0 \leq k < g, 0 \leq \ell < f, (k, \ell) \neq (0,0)\}.$ Then
$Z_{f,g}$ is a commuting set, has cardinality $d-1$ and is, therefore, a maximal
commuting set in $Y.$ Sets $Y_{\ell},$   $-(d-2) \leq \ell \leq d-1,$ may not
be enough even to cover $Y$. We can check that this is the situation for $d=6,$
and $Z_{2,3}$ and  $Z_{3,2}$ render help for this purpose.
\vskip0.1in
(e) $\mathbf{d=6}.\,\,$ The technique outlined in (b) above gives us that the
twelve $\mathbf{W}$-MASS's given by (a) and (d) above constitute the fan
$\mathcal{V}_{\mathbf{W}}.$ Here is the list:

\begin{eqnarray*}
Y_{-4} & = & \left \{(4 j,j) : j =1,2,3,4,5 \right \}\\
&& = \left \{(4,1), (2,5); (2,2), (4,4); (0,3) \right \}, \\
Y_{-3} &=& \left \{(3 j, j) : j = 1,2,3,4,5 \right \} \\
&& = \left \{(3,1), (3,5); (0,2), (0,4); (3,3) \right \},\\
Y_{-2} &=& \left \{(2j,j) : j=1,2,3,4,5 \right \} \\
&& =  \left \{(2,1), (4,5); (4,2), (2,4); (0,3) \right \}, \\
Y_{-1} &=& \left \{(0,1), (0,2), (0,3), (0,4), (0,5)\right \} 
\end{eqnarray*}
 
\begin{eqnarray*}
&& =  \left \{(0,1), (0,5); (0,2), (0,4); (0,3) \right \}, \\
Y_{0} &=& \left \{(1,0), (2,0), (3,0), (4,0), (5,0) \right \} \\
&&= \left \{(1,0), (5,0); (2,0), (4,0); (3,0) \right \}, \\
Y_{1} &=& \left \{(1,1), (2,2), (3,3), (4,4), (5,5) \right \} \\
&&= \left \{(1,1), (5,5); (2,2), (4,4); (3,3) \right \}, \\
Y_{2} &=& \left \{(1,2), (5,4); (2,4), (4,2); (3,0) \right \}, \\
Y_{3} &=& \left \{(1,3), (5,3); (2,0), (4,0); (3,3) \right \}, \\
Y_{4} &=& \left \{(1,4), (5,2); (2,2), (4,4); (3,0) \right \}, \\
Y_{5} &=& \left \{(1,5), (5,1); (2,4), (4,2); (3,3) \right \},\\
%
Z_{2,3} &=& \left \{(0,3), (2,0), (2,3), (4,0), (4,3) \right \} \\
&& = \left \{(2,0), (4,0); (2,3), (4,3); (0,3) \right \}, \\
Z_{3,2} &=& \left \{(0,2), (0,4), (3,0), (3,2), (3,4) \right \}\\ 
&& = \left \{(0,2), (0,4); (3,2), (3,4); (3,0) \right \}. 
\end{eqnarray*}

$\mathbf{W}$-MASS's are all of full-size. We have already seen in (a) above that
they do overlap and the list above makes it all clear. Figure 3 gives an idea,
where we have counted two points in a
move-together as one as per our convenience for the figure.

\begin{center}
\begin{figure}[ht]
\includegraphics[width=4in]{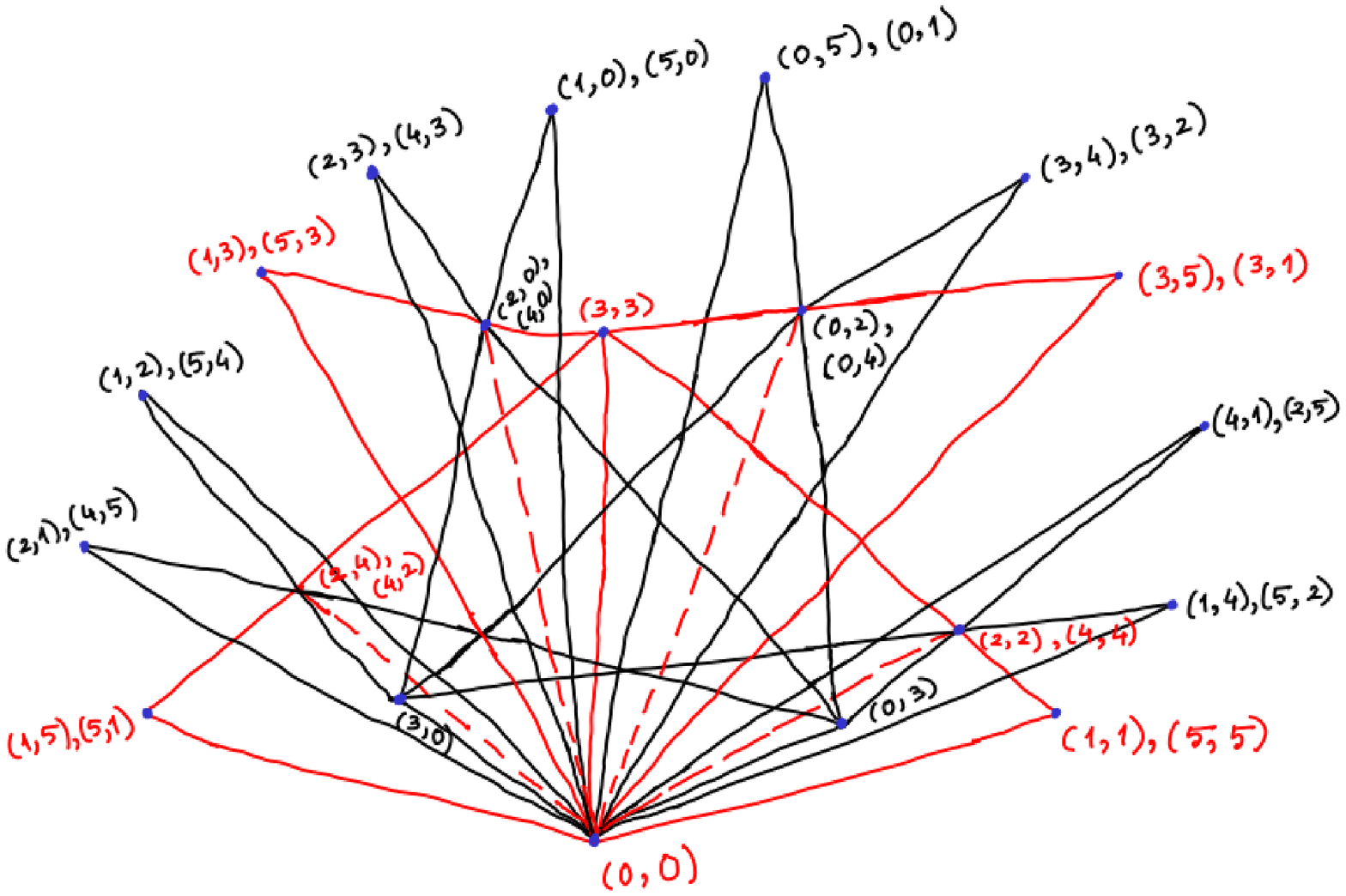}
\caption{}
\end{figure}
\end{center}

(f) Computation details in item (vi)(c) above make it clear that
$\mathbf{W}$-MASS's for tag at $(m_0, n_0)$ for the right subtraction latin
square arising from $\mathbb{Z}_d$ and all Hadamard matrices $H^n$ same as in
(v)(c) are
governed by the commuting rule
$(m-m_0)(n^{\prime}-n_0)-(m^{\prime}-m_0)(n-n_0)=0~ (\mbox{mod}\,d).$  By
Theorem 2.8 unitary basis for that and the one here are equivalent.
\vskip0.1in
(g) The descriptions of $\mathbf{W}$-MASS's for examples in (v)(e) and 
(vii)(e) above (together with Item 3.4 (iv) (b) below) make it clear that 
the two unitary bases given in these two examples for $d=6$ are not equivalent.
\vskip0.1in
\noindent{\bf Example 3.2(Pauli matrices technique). \,\,} This works for $d,$ a
power of $2$ and is based on [7],[8],[9],[10], Sych and Leuchs \cite{dsgl} and
\cite{psmw}. It
could come also under ``Shift and Multiply''-type by considering the group
$\mathbb{Z}_2 \times \mathbb{Z}_2$ and real Hadamard matrix. But it is
interesting to display the use of Pauli matrices as done in papers
cited above, particularly, \cite{dsgl} and
\cite{psmw}. \\
(i) It follows immediately that there are 15 $\mathbf{W}$-MASS's 

\vskip0.1in
of the type 
\begin{eqnarray*}
\mathcal{C}_X &=& \left \{X \otimes I, I \otimes X, X \otimes X   \right \}, \\
\mathcal{D}_X &=& \left \{X \otimes X, Y \otimes Z, Z \otimes Y   \right \}, \\
\mathcal{C}_{YZ} &=& \left \{Y \otimes I, I \otimes Z, Y \otimes Z \right\}, \\
\mbox{and,}\quad\mathcal{D}_{YZ} &=& \left \{I \otimes Y, Z \otimes I, Z \otimes
Y \right\}; \\
\mbox{together with}\quad\mathcal{E} &=& \left \{X \otimes X, Y \otimes Y,
Z \otimes Z \right\}, \\
\mathcal{F} &=& \left \{X \otimes Y, Y \otimes Z, Z \otimes X \right\}, \\
\mbox{and,}\quad\mathcal{G} &=& \left \{Y \otimes X, Z \otimes Y, X \otimes
Z \right\}.
\end{eqnarray*}
Each unitary in $\mathbf{W}$ occurs in exactly three of them. Figure 4 gives
an idea.\\
(ii) To within phases of $1$ and $-1$ and relabelling, all tags have the same
underlying unitary system $\mathbf{W}.$\\
(iii) To within PCUE, the fan system comes from (i).\\
(iv) By Theorem 2.8 (together with Item 3.4 (iv) (b) below), the unitary basis here is not equivalent to the one in 
Example 3.1 (vii)(c).

\begin{center}
\begin{figure}[ht]
\includegraphics[width=6in]{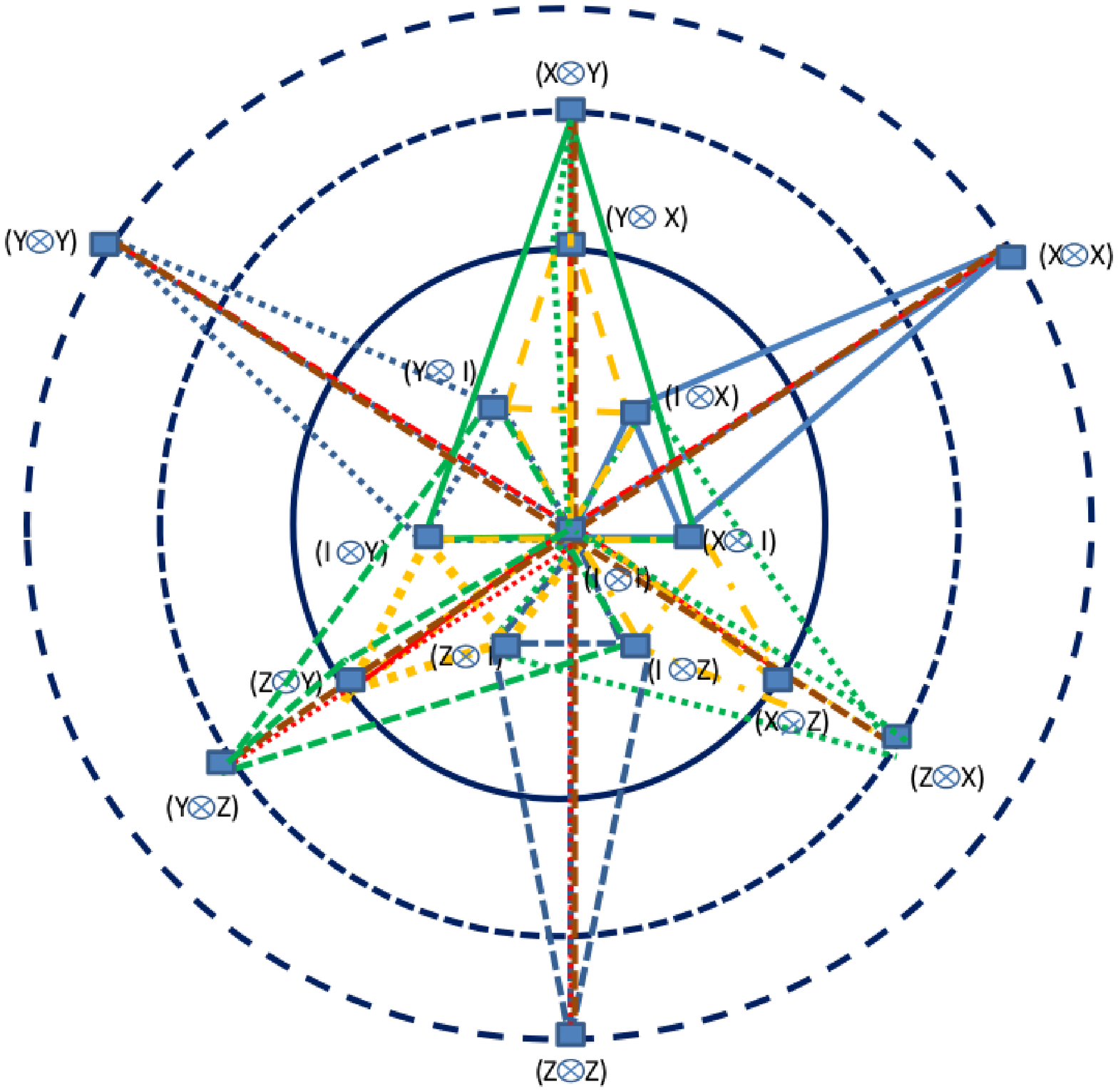}  
\caption{}
\end{figure}
\end{center}

\noindent{\bf Remark 3.3} \,\, The question of phase equivalence in the examples
above will not present significantly new points because it amounts to
multiplying different rows of the Hadamard matrix by different numbers of
modulus one.
If the latin square has a right identity $k_0,$ then we can normalize this
situation
by keeping the $k_0$-column in each Hadamard matrix consisting of one's
alone. In the particular case when $\lambda$ comes from a group, we may choose
the
identity to be the first element and thus insist on the first row and the first
column of each Hadamard matrix to consist of one's alone.
\vskip0.1in
\noindent{\bf Example 3.4}\\
For phase-equivalence the best set up is perhaps of nice unitary error bases
defined by Knill \cite{EK}.
\vskip0.1in
(i)As in (\cite{EK}, \S 2) a nice unitary error basis on a Hilbert space
$\mathcal{H}$ of  dimension $d$
is defined as a set $\mathcal{E} = \{E_g\}_{g \in G}$ where $E_g$ is unitary
on $\mathcal{H},$ $G$ is a group of order $d^2,$ $e$ its identity, ${\rm tr}E_g
= d \delta_{g,e}$ and $E_gE_h = \omega_{g,h} E_{g  h}.$  By renormalizing the
operators of the error basis, it can be assumed that $\det E_g=1,$ in which case
$\omega_{g,h}$ is a $d$-th root of unity. Error bases with this property are
called {\it very nice.} Such error bases generate a finite group of unitary
operators $\overline{\mathcal{E}}$  whose centre consists of scalar multiples of
the identity. An {\it error group} is a finite group of unitary operators
generated by a nice unitary error basis and certain multiples of the identity.
The group $\mathbf{H}$ is an {\it abstract error group} if it is isomorphic to
an error group.
\vskip0,1in
(ii) We quote Knill's Theorem without proof. 
\vskip0.1in
\noindent{\bf Theorem } (\cite{EK}, Theorem 2.1).\,\, {\it The finite group
$\mathbf{H}$ is an abstract error group if and only if  $\mathbf{H}$ has an
irredcucible character supported on the centre and the kernel of the associated
irredcucible representation is trivial.}
\vskip0.1in
(iii) These concepts have been intensively and extensively studied by
researchers and also very efficiently utilised by some of them for constructing
interesting examples of error-detecting (correcting) quantum codes. For this
purpose, the rich theory of group actions, Weyl operators, Weyl commutation
relations, multipliers, cocycles, bicharacters, imprimitivity systems has been
found to be of great importance by them, particularly by K. R. Parthasarathy
(who himself has contributed significantly to the theory for several decades, in
fact). For a good account we may refer to his recent book \cite{Pa} and
references like \cite{dg}, \cite{vakrp} and \cite{vapkkrp} therein. 
\vskip0.1in
(iv) The underlying projective representation in (i) viz., $g \rightarrow E_g$
leads to some very useful facts. \\
(a) For $g \in G$, $\omega_{g,g^{-1}} = \omega_{g^{-1}, g}$ and
$E^{\ast}_{g} =
E^{-1}_{g} = \overline{{\omega}_{g, g^{-1}}}
E_{g^{-1}}.$ \\
(b) For all tags $\mathbf{T},$ the underlying unitary system
$\mathbf{W}$ is the same up to relabelling and phases.
This permits us to consider
the fan system the same as $\mathcal{V}_{\mathbf{W}}$ for any $\mathbf{W},$ so
as
to say. In particular, we may drop $\mathbf{W}$ from $\mathbf{W}$-MASS. In
fact, it is enough to consider $\mathbf{W} = \{E_g : g \in G, g \neq e\}.$
Further, figures above display the respective fan systems as well.\\
(c)
$E_g, E_{g^{-1}}$ move
together in any $\mathbf{W}$-MASS. We now proceed to strengthen this
observation. 

\vskip0.1in
(v) Let $G$ be a group and $e$ its identity and $M$ a maximal
commutative subset of $G.$ Then $M$ is a subgroup of $G.$ To see this
well-known basic fact in group theory, we first
note that $M$ can not be empty simply because for a in $G,$  $\phi \subset
\{a\}$ which is
commutative. Now let $g_1, g_2 \in M.$ Then for $h \in M,$ $(g_1 g_2^{-1})h =
g_1 (g_2^{-1}h) = g_1 (h g_2^{-1}) = (g_1 h) g_2^{-1} = h(g_1g_2^{-1}).$ So by
maximality of $M,$ we have $g_1 g_2^{-1} \in M.$ This gives that $M$ is a
subgroup of $G.$ We may say that $M$ {\it is a maximal commutative subset of $G$
if and only if it is a maximal abelian subgroup of $G.$}
\vskip0.1in
(vi) Let $\mathbf{H}$ be a nice error group  arising from a very nice
error basis
as in (i) above.
\vskip0.1in
We write $\omega_{g,h}$ by $\omega(g,h)$ and also $ \mathbf{U}  =\{U_g : g \in
G\}$  instead of $\mathcal{E}$ for notaional convenience. Let
$\mathbf{T}_{\omega}$ be the subgroup of $S^1$ generated by the range of
$\omega.$ Then $\mathbf{H} =
\{(g,\alpha) : g \in G, \alpha \in \mathbf{T}_{\omega}\}$ and,  for $(g,
\alpha),
(h, \beta) \in \mathbf{H},$ $(g, \alpha) (h, \beta)= (gh, \omega (g,h)\alpha
\beta).$ Because $\mbox{tr}\,(U_g^{-1} U_h) = \delta_{g,h} d$ for $g,h$ in $G,$
whenever $U_g = \lambda U_h$ for some $g,h$ in $G$ and scalar $\lambda,$ we
must have $g=h$ and $\lambda = 1.$ So for $g, h \in G,$ $U_g U_h = U_h U_g$ if
and only if $\omega (g,h) U_{gh} = \omega (h,g) U_{hg}$ if and only if $gh=hg$
and $\omega(g,h)=\omega(h,g).$ So, this condition is further equivalent to
$(g,1)(h,1) = (h,1)(g,1),$ which, in turn is equivalent to $(g, \alpha)(h,
\beta)=(h, \beta)(g,\alpha)$ for $\alpha, \beta \in \mathbf{T}_{\omega}$ and
that, in turn is equivalent to $(g,\alpha)(h, \beta)=(h,\beta)(g, \alpha)$ for
some
$\alpha, \beta \in \mathbf{T}_{\omega}.$ Thus, we have the following immediate
consequences of (v) above.\\
(a) $M \subset \mathbf{U}$ is an AUS if and only if $\mathbf{H}_M= \{(g,
\alpha) : U_g \in M, \alpha \in \mathbf{T}_{\omega}\}$ is a commutative subset
of $\mathbf{H}.$\\
(b) $M$ is a $\mathbf{U}$-MASS if and only if $\mathbf{H}_M$ is a maximal
abelian subgroup of $\mathbf{H}.$\\
(c) Put $G_M = \{g \in G : U_g \in M\}$ and consider any function $\rho$ on
$G_M$ to $\mathbf{T}_{\omega}.$ Set $T_{\rho} = \{(g, \rho(g)): g \in G_M \},$
the  $\rho$-transversal. We note that $G_M$ is the first projection of any such
$T_{\rho}$ as also of $\mathbf{H}_M.$\\
(d) Thus, the problem of finding MASS's in $\mathbf{U}$ is
equivalent
to that of finding maximal  abelian subgroups of $\mathbf{H}$ with different
first
projections.\\
(e) Further development of the theory of projective representations of finite
groups
studied thoroughly by I. Schur in early 1900s is very vast and deep. The survey
article
by Costache \cite{tlc} gives a readable account.  We will
not go into details or utilise or cite scholarly papers and monographs in this
paper.
\vskip0.1in
(vii)  Klappenecker and Roetteler \cite{ar} studied the following
question
of Schlingemann and Werner: Is every nice error basis (phase-) equivalent to a
basis of shift-and-multiply type? They answered it in the negative by concrete
examples using the theory of Heisenberg groups, theory of characters and
projective representations of finite groups. One can attempt alternate proofs 
using our results and details from the theory of finite groups.

\section{Applications to Quantum tomography}
\subsection{Motivation} 

Quantum tomography is the study of identification of quantum states by means of
a pre-assigned set of measurements. This set is usually taken to be a positive
operator-valued measure (POVM) viz., a set $\mathbf{A} = \{A_j : 1 \leq j
\leq v\}$ of positive operators on $\mathcal{H}$ with $\sum\limits_{j=1}^{v}
A_j = I.$ The quantum state $\rho$ on $\mathcal{H}$ is then attempted to be
determined via the tuple $\mathbf{\beta} = (\beta_j = (\tr \,\,(\rho
A_j))^v_{j=1}$ of
measurements. Because $\tr \,\,\rho = 1,$ we see that for any $j_0,$
$\beta_{j_{0}} = 1- \sum\limits_{j_{0} \neq j=1}^{v} \beta_j,$ and thus,
only
$v-1$ measurements are needed. If we can determine all states $\rho$ on
$\mathcal{H},$ then $\mathbf{A}$  is said to be informationally complete. For
that $v$ has to be $d^2$ or more. Without going into details which one can see,
for instance, in sources (\cite{Iv}, \cite{WF}, \cite{BBRV}, \cite{LBZ}) already referred to together with
the fundamental work on Quantum designs by Zauner \cite{gz} or recent papers like
\cite{rksc}, \cite{sg}, \cite{mww}, \cite{hmw}, \cite{gg} and \cite{aer}, we come straight to the case when
$\mathbf{A}$ is informationally complete and  all
$A_j$'s except possibly one have rank one. We call them {\it pure} POVMs. The
question
as to how our results help in constructing such a POVM of optimal size was
asked by K. R.
Parthasarathy. We thank him for that and also his motivating discussion on the
topic.

\subsection{Discussion}

Theoretically speaking, if $\mathbf{U}$ is a unitary basis then the tuple
$(\tr \,\, (\rho U_x))_{x \in X}$ determines the state $\rho.$ The same is true
if we take any unitary system $\mathbf{W} = \{W_y : y \in Y\}$ of size $(d^2-1)$ and consider the
tuple $(\tr \,\,(\rho W_y))_{y \in Y}$ as a representative of $\rho.$ We draw
upon [9], particularly excerpts given in Remark 2.3 (v).\\
(i) If  $\mathbf{W}$ can be partitioned
as union of $(d+1)$
 $\mathbf{W}$-MASS's, say $\{\mathbf{V}_s : 0 \leq s \leq d\}$ of size
$(d-1),$ then the complete system of $(d+1)$ orthonormal bases, say, $\left \{
\{\mathbf{b}_t^s : 0 \leq t \leq d-1\} , 0 \leq s \leq d \right \},$ of
$\mathbb{C}^d$ obtained via [9], Theorem 3.2 (viz., $\{\mathbf{b}_t^s : 0 \leq
t \leq d-1 \}$ is a common orthonormal eigenbasis for $\mathbf{\mathbf{V}_s},$
$0 \leq s \leq d$) gives rise to $(d^2-1)$ pure states
$\{\rho_j : 1 \leq j \leq d^2-1\}$ with $\rho_{s(d-1) +t}$ determined by the
unit vector $\mathbf{b}_t^s,$ $1 \leq t \leq d-1,$ $0 \leq s \leq d.$ Thus we
obtain a pure POVM $\{ A_j = \frac{1}{(d+1)} \rho_j, \,\,0 \leq j \leq d^2-1
\},$ where $\rho_0 = (d+1) I - \sum\limits_{j=1}^{d{^2}-1} \rho_j.$
\vskip0.1in
(ii) We further recall from \cite{BBRV} (see Remark 2.3(v)) that $\mathbf{W}$'s as in
(i) above do exist when $d$ is a prime power and note that Example 3.2 and
Figure 4 illustrate the situation for $4=2^2.$ In line with the contents of \cite{psmw} 
we make the following observations facilitated by Figure 4.

(a) Three navy blue circles put together overlap with the remaining subsystems.

(b) Three maroon lines put together overlap with the remaining subsystems. 

(c )Three yellow quadrilaterals put together can be combined with the middle and the outer circle. 
But, they overlap with green and sky-blue on the hexagon and also with maroon and navy-blue on the inner circle.

(d) Three sky-blue quadrilaterals put together can be combined with the middle and the inner circle. 
But, they overlap with green and yellow on the hexagon and also with maroon and navy-blue on the outer circle.

(e)  Three green triangles put together can be combined with the inner and the outer circle. 
But, they overlap with yellow and sky-blue on the hexagon and also with maroon and navy-blue on the middle circle.
\vskip0.1in
(iii) As noted in Theorem 2.8 (i) above, if each $W_y$ in $\mathbf{W}$ has
simple eigenvalues, then $\mathbf{V}_{\alpha}$'s constituting the fan
representation of $\mathbf{W}$ are mutually disjoint. This is a situation
similar to (i) above except that there is no guarantee that the sizes of
$\mathbf{V}_{\alpha}$'s are all $(d-1)$ or, equivalently, that of $\Lambda$ is
$(d+1).$ We do not yet have an example for that. In any case, the technique
indicated in (i) above does give a pure POVM of size $(d-1) \,|\Lambda| + 1.$
\vskip0.1in
(iv) In general, for a composite $d$ which is not a prime power, or, even when
$d$ is a prime power, a given $\mathbf{W}$ may not be decomposable as in (i)
above.
This can be seen in Example 3.1 (v) (e) \& Example 3.1 (vii) (e), and Example 3.1 (vii)(c) above
respectively. Figures 1 and 3 make it clear that the whole fan is needed to cover $\mathbf{U}$ 
and in Figure 2, only the red part can be ignored to obtain a smaller subset of the 
fan to cover $\mathbf{W}$.  Once again, Theorem 2.8(i) and Remark 2.9(iv) tell us that
overlapping $W_y$'s have to possess multiple eigenvalues. In the rest of this
section we make an attempt to obtain pure POVM's of optimal size for such
situations.
\vskip0.1in
(v) What comes in handy for our purpose is a minimal subset, say,
$\mathcal{M}_{\mathbf{W}},$ of  $\mathcal{V}_{\mathbf{W}}$ satisfying 
$\mathbf{W}= \cup \{ \mathbf{V}
: \mathbf{V} \in  \mathcal{M}_{\mathbf{W}}\}.$ We may write 
$\mathcal{M}_{\mathbf{W}} =
\{\mathbf{V}_{\alpha} : \alpha \in \Lambda_1\}$ with $\Lambda_1 \subset
\Lambda,$ if we like. A crude way to obtain a pure POVM would be to consider a
common orthonormal 
eigenbasis  $\{\mathbf{b}_t^{\alpha} : 0 \leq t \leq d-1\}$ for 
$\mathbf{V}_{\alpha}$ with $\alpha \in
\Lambda_1$ and construct a pure POVM as in (i) above, say $\mathbf{A} =
\{A_j : 0 \leq j \leq (d-1) |\Lambda_1|\}.$ We can refine this construction to
obtain a pure POVM of smaller size. We illustrate this refining process by means
of examples.

\subsection{Illustration} The context here is of Example 3.1 (vii). 
\vskip0.1in
(i) In view of items (a) and (b) there and minimality of
$\mathcal{M}_{\mathbf{W}},$ a crude bound $s_d$ for
$|\mathcal{M}_{\mathbf{W}}|$ can be given as follows.

For odd $d,$ $s_d = 4 + \frac{1}{2} \left [(d^2-1) - 4 (d-1) \right ] =
\frac{1}{2} \left [7 + (d-2)^2 \right ].$ For even $d > 2, \,\,s_d =4 +
\frac{1}{2}\left [(d^2-1) -
4 (d-1)+1 \right ] = 4 + \frac{1}{2}(d-2)^2.$ It is interesting that
for $d=4$ and $6,$ Example 3.1 (vii)(c) and (e) show that  $s_d$ is
$|\mathcal{M}_{\mathbf{W}}|.$ So by 4.2 (v) above, we can have a pure POVM of
size $\leq (d-1) s_d + 1.$ We can improve upon this for certain even $d$ as we
show in parts that follow.

(ii) Let us consider Example 3.1 (vii)(c) together with Figure 2. The unitary
system $\mathbf{W}$ is the union of the first six $\mathbf{W}$-MASS's. Each of
$(2,0),$  $(0,2),$ and $(2,2)$ occurs in one pair. For instance,
$(2,2)$ occurs in \linebreak
$\left \{\left \{(2,2), (1,1), (3,3)\right \} \right \},$ $\left \{\left
\{(2,2), (3,1), (1,3) \right \}\right \}.$ Each of $(2,0),$ $(0,2),$ $(2,2)$
has eigenvalues $1$ and $-1$ both of multiplicity two. We pick up any of these
three unitaries, say, $(2,2),$ and the corresponding pair of $\mathbf{W}$-MASS's
for illustration. Let $\mathcal{H}_1$ and $\mathcal{H}_2$ be the two
corresponding eigenspaces so that $\mathbb{C}^4 = \mathcal{H}_1 \oplus
\mathcal{H}_2.$ We can write the three unitaries $(2,2),$ $1,1),$ $(3,3)$ as
blocks of operators $\left [\begin{array}{cc} I & 0 \\ 0 & -I \end{array} \right
],$ $\left [\begin{array}{cc}R_1 & 0 \\ 0 & R_2 \end{array} \right ],$ and
$\left [\begin{array}{cc} S_1 & 0 \\ 0 & S_2 \end{array} \right ]$ respectively.
Let $\left \{\xi_j^k, k,j = 1,2 \right \}$ be a  common system of orthonormal
eigenvectors with $\xi_j^k \in \mathcal{H}_k$ for  $k,j = 1,2.$ Then  $\rho_{
\xi^{k}_{1}} + \rho_{\xi_{2}^{k}} = P_{\mathcal{H}_{k}},$ the projection on
$\mathcal{H}_k$
for $k=1,2.$  Similar remarks apply  to the unitaries $(2,2),$ $(3,1),$ $(1,3)$
and we take the common system of orthonormal eigenvectors as $\{\eta_j^k : k, j
= 1,2 \}.$ Then $\rho_{\eta_{1}^{k}} + \rho_{\eta_{2}^{k}} =
P_{\mathcal{H}_{k}}$ for $k=1,2.$ We may take the unit vectors $\{\xi_1^1,
\xi_2^1, \xi_1^2,  \eta_1^1, \eta_1^2 \}$ and the corresponding pure
states $\{ \rho_1^1, \rho_2^1, \rho_1^2, \sigma_1^1, \sigma_1^2\}.$ We collect
such pure state five-tuples for the remaining two pairs of
$\mathbf{W}$-MASS's. Thus, we have exactly 15 pure states which give rise to a
desired pure POVM.
\vskip0.1in
(iii) Let us consider Example 3.1 (vii)(e) now together with Figure 3. We need
all the 12 $\mathbf{W}$-MASS's to make up $\mathbf{W}$ as their union. Each of
the
move-together pairs  $\{(2,0), (4,0)\}, $ $\{(2,2), (4,4)\}, $ $\{(2,4),
(4,2)\},
$ and $\{(0,2), (0,4)\}$ occurs in exactly three $\mathbf{W}$-MASS's. The
move togethers  are scalar multiples of inverses of each other. So they have
common eigenvectors, and eigenvalues, though different as tuples but 
of the same multiplicity 2. We pick up any, say, $(2,2), (4,4)$ for
illustration. So
we
can write $\mathbb{C}^6$ as $\mathcal{H}_1 \oplus \mathcal{H}_2 \oplus
\mathcal{H}_3,$ with each $\mathcal{H}_k$ as eigenspace for $(2,2)$
corresponding to eigenvalue, say, $\lambda_k$ of multiplicity 2. We can write
the
operators in the first $\mathbf{W}$-MASS, say, $Y_{-4}$ as block  $\left [
\begin{array}{ccc} R_1 & 0 & 0 \\ 0 & R_2 & 0 \\ 0 & 0 & R_3  \end{array} \right
],$  $\left [
\begin{array}{ccc} S_1 & 0 & 0 \\ 0 & S_2 & 0 \\ 0 & 0 & S_3  \end{array} \right
],$ $\left [
\begin{array}{ccc} \lambda_1 I & 0 & 0 \\ 0&\lambda_2 I & 0 \\ 0 & 0& \lambda_3
I 
\end{array} \right ],$ $\left [
\begin{array}{ccc} \mu_1, I & 0 & 0 \\ 0 & \mu_2 I & 0 \\ 0 & 0 & \mu_3 I 
\end{array} \right ],$ $\left [
\begin{array}{ccc} T_1 & 0 & 0 \\ 0 & T_2 & 0 \\ 0 & 0 & T_3 
\end{array} \right ]$ with $S_k = \mu R_k^{-1},$ for some
$\mu \in S^1.$\\ Let $\left \{\xi_j^k, j = 1,2, k=1,2,3 \right
\}$ be a
common system of eigenvectors with $\xi_j^k \in \mathcal{H}_k$ for $k=1,2,3$
and $j=1,2.$
Arguing as in (i) above we can have common systems of eigenvectors $\left \{
\eta_j^k : j =1,2; k = 1,2,3 \right \}$ and $\left \{\zeta_j^k : j =1,2; k 
\right . $ $\left . = 1,2,3 \right \}$ for the remaining two
$\mathbf{W}$-MASS's. We may take the unit vectors \linebreak $\left \{ \xi_1^1,
\xi_2^1,
\xi_1^2, \xi_2^2, \xi_1^3, \eta_1^1, \eta_1^2, \eta_1^3, \zeta_1^1, \zeta_1^2,
\zeta_1^3 \right \}$ and the corresponding pure states \linebreak $\left
\{\rho_1^1,
\rho_2^1, \rho_1^2, \rho_2^2, \rho_1^3, \sigma_1^1, \sigma_1^2, \sigma_1^3,
\gamma_1^1, \gamma_1^2, \gamma_1^3 \right \}.$ We collect such pure state
11-tuples for the remaining 3 triples of $\mathbf{W}$-MASS's Thus we have a pure
POVM of size 45. The ideal desired number is 36, of course but, this method does
not give this, if any. We note that doing it via $(3,0),$ $(0,3),$ $(3,3)$ we
will
obtain a pure POVM of size 52 by our method.
\vskip0.1in
(iv) Let us consider the case $d=2r$ with $r$ odd, $\ge 3$ in Example 3.1 (vii).
As
noted in part (b) of Example 3.1(vii), $(r,0) \Delta (m^{\prime}, n^{\prime})$
if and only if $n^{\prime}$
is even, $(0,r) \Delta (m^{\prime}, n^{\prime})$ if and only if $m^{\prime}$ is
even, $(r,r)\Delta(m^{\prime}, n^{\prime})$ if and only if
$m^{\prime}-n^{\prime}$ is even. Also $(r,0),$ $(0,r),$ $(r,r)$ do not commute
with each other. So a minimal $\mathcal{M}_{\mathbf{W}}$ exists satisfying:
$(r,0)$
present in $u$ $\mathbf{W}$-MASS's in $\mathcal{M}_{\mathbf{W}},$ $(0,r)$ in
$v$ of  them and $(r,r)$ in $w$ of them, and, these three types exhaust
$\mathcal{M}_{\mathbf{W}}.$ Then $u+v+w=|\mathcal{M}_{\mathbf{W}}|.$  The
technique indicated for $r=3$ in (iii) above gives that there exists a pure
POVM of size $\leq ((d-1) + (u-1) (d-2)) + ((d-1) + (v-1)(d-2))+ ((d-1) +
(w-1)(d-2))+1 = 4 + (d-2) (u+v+w)= 4 + (d-2) |\mathcal{M}_{\mathbf{W}}|.$ This
agrees with the number 52 for $r=3,$ for in that case $u=v=w=4$ and
$\mathcal{M}_{\mathbf{W}}|=12.$ In the general case, the crude bound $s_d$
given in 4.3(i) can be used. Thus we have a pure POVM of size $\leq 4 + (d-2)
(4+\frac{1}{2} (d-2)^2) = 4 [1 + 2 (r-1) +(r-1)^3].$
\vskip0.1in

\section{Conclusion}
We started with a unitary basis $\mathbf{U}=\{U_x, x \in  X\}$ on
a Hilbert space $\mathcal{H}$ of dimension $d$ and an $x_0 \in X.$ We
associated the tag  $\mathbf{T}=(x_0, U_{x_{0}}, \mathbf{W})$ at
$x_0,$ where  $\mathbf{W} = \{W_x = U_{x_{0}}^{\ast} U_x, x \in X,
x \neq x_0   \}.$ We obtained a covering of
$\mathbf{W}$ by maximal abelian subsets of $\mathbf{W}$ (called $\mathbf{W}$
MASS's). We obtained the set of $\mathbf{W}$ MASS's for different concrete
$\mathbf{W}$'s displaying various patterns, like mutually disjoint, overlapping
in different ways, and, therefore, called them fans. Varying $x_0,$ the whole
collection was called a fan system of $\mathbf{U}.$ We showed that it is an
invariant of $\mathbf{U}$ to within (phase) equivalence of unitary bases \cite{Wer}.
The concept of
collective unitary equivalence was utilised for this purpose. It was also used
to construct positive partial transpose matrices. Finally, applications of fan
to quantum tomography were indicated. Examples have been given to illustrate the results.

\section*{Acknowledgement}

The second author expresses her deep sense of gratitude to \linebreak K.
R. Parthasarathy. She has learnt most of the basic concepts in this
paper from him during his Seminar Series of Stat. Math. Unit at the Indian
Statistical Institute, New Delhi, University of Delhi and elsewhere. She has
gained immensely from insightful discussion sessions with him
from time to time.
\vskip0.1in
She thanks V. S. Sunder and V. Kodiyalam for supporting her visit to The
Institute of Mathematical Sciences (IMSc), Chennai to participate in their
scholarly workshops on ``Functional Analysis of Quantum Information Theory'' in
December, 2011 - January, 2012, ``Planar algebras'' in March-April, 2012 and
Sunder Fest in April, 2012. This enabled her to learn more from experts at these
events and initiate her interaction with her co-author. She also thanks R.
Balasubramanian,
Director, IMSc, for providing more such opportunities, kind hospitality,
stimulating research atmosphere and encouragement
all through. 
\vskip0.1in
She  thanks Indian National Science Academy for support under the INSA Senior
Scientist and Honorary Scientist  Programmes and Indian Statistical
Institute, New Delhi for  Visiting positions under these
programmes
together with excellent research atmosphere and facilities all the time.
\vskip0.1in
Finally, the authors thank Kenneth A. Ross for reading the paper and suggesting
improvements. They also thank Mr. Anil Kumar Shukla for transforming the
manuscript
into its present \LaTeX \,\,form, and for his cooperation and patience
during the
process.

\end{document}